\definecolor{Gray}{gray}{0.9}
\pgfplotsset{compat=newest}
\newtheorem{theorem}{Theorem}
\newtheorem{rem}{Remark}
\newcommand{\cmark}{\ding{51}}%
\newcommand{\xmark}{\ding{55}}%
\newcommand{\lmark}{\textbf{--}}
\newcommand{\tr}{\text{Tr}}
\newcommand{\bc}{\text{BackCom}\xspace}
\newcommand{\abc}{\text{AmBC}\xspace} 
\newcommand{\thr}{\text{th}}
\newcommand{\qh}{\mathbf{h}}
\newcommand{\qf}{\mathbf{f}}
\newcommand{\qs}{\mathbf{s}}
\newcommand{\qg}{\mathbf{g}}
\newcommand{\qr}{\mathbf{r}}
\newcommand{\qu}{\mathbf{u}}
\newcommand{\qw}{\mathbf{w}}
\newcommand{\qx}{\mathbf{x}}
\newcommand{\qS}{\mathbf{S}}
\newcommand{\qG}{\mathbf{G}}
\newcommand{\qW}{\mathbf{W}}
\newcommand{\qQ}{\mathbf{Q}}
\newcommand{\qI}{\mathbf{I}}
\newcommand{\qR}{\mathbf{R}}
\newcommand{\qP}{\mathbf{P}}
\newcommand{\qalpha}{\boldsymbol{\alpha}}
\newcommand{\qU}{\mathbf{U}}
\definecolor{LightBlue}{rgb}{0.88,0.95,1}
\definecolor{DarkBlue}{rgb}{0.0,0.2,0.6}
\definecolor{LightGray}{gray}{0.9}
\definecolor{MediumGray}{gray}{0.5}
\definecolor{DarkGray}{gray}{0.2}
\begin{document}
\bstctlcite{IEEEexample:BSTcontrol}

\title{Optimization of Rate-Splitting Multiple Access with Integrated Sensing and Backscatter Communication\vspace{-0mm}}

\author{Diluka Galappaththige, \IEEEmembership{Member, IEEE}, Shayan Zargari,   Chintha Tellambura, \IEEEmembership{Fellow, IEEE}, and Geoffrey Ye Li, \IEEEmembership{Fellow, IEEE,}
\thanks{D. Galappaththige, S. Zargari, and C. Tellambura with the Department of Electrical and Computer Engineering, University of Alberta, Edmonton, AB, T6G 1H9, Canada (e-mail: \{diluka.lg, zargari, ct4\}@ualberta.ca). \\
\indent G. Y. Li is with the ITP Lab, the Department of Electrical and Electronic Engineering, Imperial College London, SW7 2BX London, U.K.(e-mail: geoffrey.li@imperial.ac.uk).} \vspace{-0mm} }

\maketitle

\begin{abstract} 
An integrated sensing and backscatter communication (ISABC) system is introduced herein. This system features a full-duplex (FD) base station (BS) that seamlessly merges sensing with backscatter communication and supports multiple users. Multiple access (MA) for the user is provided by employing rate-splitting multiple access (RSMA).  RSMA, unlike other classical orthogonal and non-orthogonal MA schemes, splits messages into common and private streams.  With RSMA, the set of common rate forms can be optimized to reduce interference. Optimized formulas are thus derived for communication rates for users, tags, and the BS's sensing rate, with the primary goal of enhancing the transmission efficiency of the BS. The optimization task involves minimizing the BS's overall transmission power by jointly optimizing the BS's beamforming vectors, the tag reflection coefficients, and user common rates. The alternating optimization method is employed to address this challenge. Concrete solutions are provided for the received beamformers, and semi-definite relaxation and slack-optimization techniques are adopted for transmit beamformers and reflection coefficients, respectively. For example,  the proposed RSMA-assisted ISABC system achieves a  \qty{84.5}{\percent} communication rate boost over a non-orthogonal multiple access-assisted ISABC, with only a \qty{24}{\percent} increase in transmit power, leveraging ten transmit/reception antennas at the BS.
\end{abstract}

\begin{IEEEkeywords}
Backscatter communication (\bc), integrated sensing and communication (ISAC), passive tags, rate-splitting multiple access (RSMA).
\end{IEEEkeywords}

\IEEEpeerreviewmaketitle
\section{Introduction}
Integrated sensing and communications (ISAC) networks are envisioned for sixth-generation (6G) wireless. In traditional radar-based sensing, a reflected signal from an object/target is used to sense/detect the object \cite{3GPPISAC2024}. Similarly, backscatter communication (\bc) uses radar-like load modulation to send data,  i.e., a backscatter device (or a tag) modulates its data onto an external radio frequency (RF)  signal and reflects it to a reader \cite{HoangBook2020, Diluka2022, Rezaei2023Coding, Rezaei2020}.   In \cite{Diluka2023}, by leveraging the similarities between \bc and radar systems, i.e., the use of reflected signals, a novel concept called \textit{Integrated Sensing and Backscatter Communications (ISABC)}, has been proposed. This work investigates the feasibility of integrating sensing into low-power Internet-of-Things (IoT) devices. This new paradigm combines integrated sensing and communication (ISAC) with \bc features, resulting in concurrent sensing and communication in ambient power-enabled IoT networks.

\begin{table}[t]
    \centering
    \renewcommand{\arraystretch}{1.1} 
    \setlength{\tabcolsep}{10pt} 
    \caption{A comparison between ISAC and ISABC.}
    \begin{tabular}{|l|c|c|}
    \hline
    \textbf{Features} & \textbf{ISAC}  &  \textbf{ISABC} \\ \hline \hline
    Target & \checkmark & $\times$ \\\hline
    Tag & $\times$ & \checkmark \\ \hline
    Additional user data  & $\times$ & \checkmark \\\hline
    BS Power allocation  & \checkmark & \checkmark \\\hline
    User decoding & Conventional & SIC \\\hline
    Sensing type & Active/Passive & Active \\ 
    \hline
    \end{tabular}
    \label{tab:new_comparison}
    \vspace{-0mm}
\end{table}

Combining the strengths of ISAC and \bc, ISABC introduces backscatter tags as replacements for the conventional sensing targets, enabling opportunistic sensing in \bc systems \cite{Diluka2023}. Although ISABC can be considered a device-based ISAC variant with active sensing, this replacement distinguishes ISABC from traditional ISAC in several ways (Table \ref{tab:new_comparison}). In  ISAC, sensing targets can be passive objects, such as vehicles, which neither transmit nor receive sensing signals, nor active devices, such as mobile phones, which engage in transmitting and/or receiving.  In contrast, in ISABC, backscatter tags act as sensing targets. These tags relay environmental details to the base station (BS) and transmit additional data to users or readers. Thus, the BS can use the tag-reflected signals for sensing. This approach enhances communication and sensing capabilities by harnessing both sensing and backscatter data. However, the user/reader may need more advanced decoding mechanisms, such as successive interference cancellation (SIC).

6G IoT networks require accurate sensing, high data throughput, low latency, connectivity, reliability, and energy efficiency (EE) \cite{Huawei_ambient, Huawei}. Ambient power-enabled (battery-free) IoT applications, such as smart homes, smart cities, industrial IoT, and environmental monitoring, rely on sensing and exchanging environmental data without explicit human input \cite{HoangBook2020, Diluka2022, Rezaei2023Coding}. For instance, in smart homes, tags can function as sensors, reflecting signals to the BS for parameter estimation and monitoring while modulating and transmitting data for communication.

The energy-efficient dual sensing and communication capability of ISABC makes it ideal for applications mentioned above \cite{Huawei_ambient, Huawei}. In dense environments like industrial IoT or smart cities,  interference reduction improves resource allocation, enhancing network performance. Furthermore, backscatter-aided vehicular networks support navigation, traffic and pedestrian monitoring, environmental surveillance, road safety, and autonomous driving, requiring robust sensing capabilities \cite{Khan2021, Xu2023}. ISABC effectively addresses these demands, enabling such applications.

Interference issues in dense IoT environments, energy constraints, and hardware limitations must be addressed for successful deployment. The ISABC system should be tailored to work with IoT devices with limited processing power and rely on constrained energy sources or energy harvesting (EH). Aligning with 6G standards and technologies, particularly the integration of sensing, communication, and localization, further enhances the ISABC's applicability {\cite{Huawei_ambient, Huawei}}. Future field testing and simulations could validate ISABC's performance in realistic scenarios, confirming its potential for adoption in 6G IoT networks. 

Conversely, ISABC can build upon existing {\bc} studies {\cite{Liu2023Covert, Caihui2023}}, further optimizing resource allocation and enhancing performance. For instance, multi-antenna tags in {\cite{Liu2023Covert}} can significantly improve both sensing accuracy and communication efficiency. In contrast, integrating commercial off-the-shelf hardware in {\cite{Caihui2023}} into our framework could facilitate a cost-effective implementation. Through these advancements, our work may bridge the gap between high-performance sensing and efficient {\bc}, making it a promising approach for future IoT applications.

This research extends the notion of \cite{Diluka2023, Zargari2024} to a general framework with enhanced spectrum efficiency by integrating it with rate-splitting multiple access (RSMA) (see Section~\ref{sec_motiv}). 

\subsection{ISAC, RSMA, and BackCom}

Since our study integrates these three technologies, this section briefly describes them. 

ISAC reshapes 6G networks by enabling simultaneous communication and sensing through a shared hardware platform and signal processing framework \cite{Liu2022ISAC, Azar2024, Diluka2024NF}. This dual functionality allows the ISAC base station to communicate with users while sensing environmental information to support services such as localization, activity monitoring, object detection, and mapping. Furthermore, ISAC enhances communication by leveraging sensed data for precise beamforming, rapid failure recovery, and reduced channel estimation overhead \cite{Liu2022ISAC, Azar2024, Diluka2024NF}.

RSMA is an advanced multiple-access scheme that effectively controls interference by splitting user messages into common and private parts \cite{Clerckx2016, Mao2022, Xu2021}. The common parts are combined into a single stream, while private parts are encoded separately. Each user receiver first decodes and removes the common stream before decoding the private stream, using successive interference cancellation (SIC). This method offers superior spectral and energy efficiency compared to conventional schemes like space-division multiple access (SDMA), which treats interference as noise, and non-orthogonal multiple access (NOMA), which fully decodes interference \cite{Clerckx2016, Mao2022, Xu2021}.

\bc facilitates energy-efficient IoT networks \cite{HoangBook2020, Diluka2022, Rezaei2023Coding}. It enables wireless nodes (tags or IoT devices) to communicate by reflecting external radio frequency (RF) signals.  Thus, tags can operate without active RF components. This approach may be a sustainable alternative for battery-dependent IoT devices, which incur high maintenance costs, environmental impact, and safety issues \cite{HoangBook2020, Diluka2022, Rezaei2023Coding}. These tags, without active RF components and reflecting external RF, achieve ultra-low cost and energy use, ranging from nanowatts to microwatts \cite{HoangBook2020}.

\subsection{Motivation and Our Contribution}\label{sec_motiv}
This study introduces a novel and generalized system model for ISABC: multiple tags, multiple users, a backscatter reader, and a full-duplex (FD) BS (Fig.~\ref{fig_SystemModel}). This network combines primary users, tags, a reader,  and sensing into a unified framework, posing three critical challenges.
\begin{enumerate}
    \item The first challenge is providing multiple access for primary users.   Conventional solutions are time-division multiple access (TDMA), NOMA, and SDMA. In  SDMA, the rates can plateau even with excess transmit powers if the number of transmit antennas is insufficient or a resource block has multiple users \cite{Mao2018Rate}.  In contrast, NOMA handles interference by employing superposition coding and SIC. However, NOMA  has stringent decoding and SIC requirements, and channel gains must have a significant disparity \cite{Mao2018Rate}.  RSMA outperforms these regarding spectrum and energy efficiencies \cite{Mao2018Rate}. For these reasons,  RSMA-assisted ISABC is proposed for this study and evaluated against NOMA and SDMA benchmarks. 
    \item The second challenge is for the reader to detect the tag signals significantly weaker than direct signals due to the double path loss effect \cite{HoangBook2020, Diluka2022, Rezaei2023Coding}. Thus, the direct link interference (DLI) must be compensated for \cite{Long2020}. This assumes that the reader first decodes the users' data and subtracts the user interference or the DLI before decoding the tags data. Although this can be realized if the reader and BS exchange information through a control link \cite{Long2020, positioningLTE}, DLI cannot be entirely eliminated due to hardware limitations such as the reader's limited dynamic range. Thus, imperfect SIC to partially remove DLI occurs before decoding the backscatter data \cite{Long2020}.
    
    \item The third challenge is the  BS's ability to sense the tag signals, while strong self-interference (SI) affects its sensing  \cite{Zhenyao2023}. To successfully sense the tag-reflected signal, the BS must reduce SI power to a level comparable to the tags' backscattered signal power \cite{Zhenyao2023}. The SI suppression is not perfect.  Thus, the BS will operate with a residual SI. 
\end{enumerate}

Prior works \cite{Diluka2023} and \cite{Zargari2024} have several limitations.    Specifically,  \cite{Diluka2023}  considers only one backscatter tag and a user, a special case of our model (Fig. 1). Thus, \cite{Diluka2023} does not consider multiple access for users.     Reference \cite{Zargari2024} extends  \cite{Diluka2023} for multiple tags and a single user. However, it uses perfect SIC for backscatter data decoding at the user and perfect SI cancellation at the BS for acquiring sensing information. Conversely, this work generalizes the system model and addresses the technical limitations in   \cite{Diluka2023, Zargari2024} by investigating multi-tag with EH, RSMA-enabled multiple access for the primary users, and effects of imperfect SIC and SI cancellation. Thus, the optimization framework differs significantly from \cite{Zargari2024}.

{On the other hand, references {\cite{Du2024ConcurScatter, Du2024Orthcatter, Bowen2023}} address these challenges from various perspectives. Specifically, {\cite{Du2024ConcurScatter}} introduces an OFDM-enabled backscatter system that uses subcarrier pattern diversity to support multiple tags simultaneously. In {\cite{Du2024Orthcatter}}, an in-band OFDM backscatter system employs over-the-air code division to cancel co-channel interference. {\cite{Bowen2023}} proposes a constructive interference scheme that leverages direct link interference to enhance the backscattered signal and maximize the received SNR at the reader. While these approaches address some critical challenges, they overlook the impact of backscatter transmissions on primary system performance and do not utilize backscattered signals to extract environmental information from tags.}

Inspired by the challenges above and to explore the potential applications of ISABC, we present a novel RSMA-enabled ISABC system. Our  contributions are summarized as follows:
\begin{enumerate}
    \item  The following system model is developed. The BS uses RSMA to serve primary users and enhance their communication performance. The tags reflect the BS's signal to communicate with the reader, while the BS uses the same reflected signal to infer the tags' environmental information. 

    \item The  BS transmit power minimization problem, $\mathbf{P}_1$ \eqref{P1_prob},  is formulated. It preserves all nodes' quality-of-service (QoS) requirements, including tags and users. The optimization variables are the BS transmit/received beamforming, tag reflection coefficients, and users' common rates. Due to the products of the optimization variables,  $\mathbf{P}_1$ is non-convex, and the widely available convex algorithms are not applicable. 
    
    \item To address this challenge, an alternative optimization (AO) strategy \cite{bezdek2003convergence} is employed, beginning by optimizing the BS received beamforming for the tags' signal using minimum mean-squared error (MMSE) filtering and the generalized Rayleigh quotient form of the signal-to-noise-to-interference ratio (SINR). The semidefinite relaxation (SDR) approach is then used to determine the BS transmit beamforming and the user common rates  \cite{so2007approximating, Qingqing2019}. Finally, the feasibility problem of tag reflection coefficient optimization is transformed into a slack-optimization problem to obtain an effective solution \cite{Shayan2021, Razaviyayn2013}.

    \item The benefits of the proposed RSMA-assisted ISABC are compared to NOMA-assisted ISABC, RSMA-assisted \bc, conventional \bc, conventional ISABC, and sensing-only schemes (with EH). The convergence and complexity are also analyzed. With a configuration of ten BS antennas each for transmission and reception, the proposed scheme offers a \qty{84.5}{\percent}  communication rate enhancement compared to the NOMA-assisted ISABC while only requiring \qty{24}{\percent}  transmit power rise.    
\end{enumerate}

\subsection{Previous RSAM Contributions on ISAC and \bc}
{While  RSMA-assisted ISABC has not been studied before, several works have explored RSMA-assisted ISAC {\cite{Yin2022ISAC, Liu2024, He2024, Chen2024}} and RSMA-assisted {\bc} {\cite{Vu2024}}. In particular, reference {\cite{Yin2022ISAC}} studies an RSMA-assisted ISAC system with multiple users and a single target. The BS beamforming is designed to jointly minimize the Cram\'{e}r-Rao bound (CRB) of target estimation and maximize the minimum rate among users, demonstrating the superiority of RSMA compared to SDMA and NOMA in ISAC networks. Reference {\cite{Liu2024}} employs RSMA  as the multiple access technique in a multi-user, single-target ISAC satellite system. Using an iterative algorithm based on successive convex approximation (SCA) and Dinkelbach's method, beamforming at the satellite transmitter is designed to maximize the system's EE. Reference {\cite{He2024}} investigates a multi-vehicle collaborative sensing scheme in an RSMA-assisted ISAC system with a single target. It derives the sum rate lower bound under imperfect channel state information (CSI) and the CRB. Additionally, the CRB is minimized by optimizing the rate splitting ratio, BS beamforming, and transmit power using an SDR-based algorithm. In {\cite{Chen2024}}, BS beamforming is designed to jointly maximize the minimum user rate and minimize the largest eigenvalue of the CRB matrix in a multi-user, multi-target ISAC system. An SDR-based algorithm is proposed, demonstrating RSMA's superiority over SDMA in balancing the communication and sensing trade-off.}

Reference {\cite{Vu2024}} presents an RSMA-assisted symbiotic {\bc} system with two primary users and a single tag. The closed-form outage probability is obtained by designing beamforming weights with four different gain-control techniques.

Table~{\ref{tab_comparison}} provides a summary of relevant works, comparing this work with other RSMA-assisted ISAC and {\bc} literature. Substantial differences in system design, problem formulation, and algorithmic approach are evident. Importantly, these previous works do not consider ISABC. Hence, our key technological contribution is the beamforming design for an RSMA-assisted multi-user, multi-tag ISABC system, addressing the sensing requirements in future low-power IoT networks.

\begin{table*}[!t]
\centering
\begin{threeparttable}
\renewcommand{\arraystretch}{1.2}
\caption{{A comparison of RSMA-related literature.}}
\label{tab_comparison}
\begin{tabular}{|c| p{2.0cm}|lllc|p{5.5cm}|p{1.5cm}|}
\hline 
\multicolumn{1}{|c|}{} & \multicolumn{1}{c|}{} & \multicolumn{4}{c|}{System}                                                             & \multicolumn{1}{c|}{} & \multicolumn{1}{c|}{} \\ \cline{3-6}
\multicolumn{1}{|c|}{\multirow{-2}{*}{Ref.}}                      & \multicolumn{1}{c|}{\multirow{-2}{*}{Configuration}}                               & \multicolumn{1}{c|}{FD BS} & \multicolumn{1}{c|}{User} & \multicolumn{1}{c|}{Target} & Tag & \multicolumn{1}{c|}{\multirow{-2}{*}{Objective}}                           & \multicolumn{1}{c|}{\multirow{-2}{*}{Algorithm}}                           \\ \hline \hline
\cite{Yin2022ISAC}   &  & \multicolumn{1}{c|}{\xmark}   & \multicolumn{1}{c|}{Multiple}   & \multicolumn{1}{c|}{Single}  &  \lmark   & Jointly minimize CRB and maximize $\mathcal{R}_{\min}$  &  SCA \\ \cline{1-1} \cline{3-8}
\cite{Liu2024}   &  & \multicolumn{1}{c|}{\xmark}   & \multicolumn{1}{c|}{Multiple}   & \multicolumn{1}{c|}{Single}  &  \lmark   & Maximize the max-min fairness EE with low-resolution DACs  &  SCA, Dinkelbach \\ \cline{1-1} \cline{3-8}
\cite{He2024}   &  & \multicolumn{1}{c|}{\xmark}   & \multicolumn{1}{c|}{Multiple}   & \multicolumn{1}{c|}{Single}  & \lmark   & Minimize CRB subjected to $\mathcal{R}_u$  &  SDR \\ \cline{1-1} \cline{3-8}  
\cite{Chen2024}   & \multirow{-4}{*}{RSMA+ISAC} & \multicolumn{1}{c|}{\xmark}   & \multicolumn{1}{c|}{Multiple}   & \multicolumn{1}{c|}{Multiple}  &  \lmark  & Jointly maximize $\mathcal{R}_{\min}$ and minimize the largest
eigenvalue of the CRB matrix  &  SDR \\ \hline
\cite{Vu2024}   & RSMA+\bc & \multicolumn{1}{c|}{\xmark}   & \multicolumn{1}{c|}{Two}   & \multicolumn{1}{c|}{\lmark}  &   Single  & Closed-form expression for outage probability  & \lmark  \\ \hline
\textbf{This paper}   & RSMA+ISABC & \multicolumn{1}{c|}{\cmark}   & \multicolumn{1}{c|}{Multiple}   & \multicolumn{1}{c|}{\lmark}  & Multiple    & Minimize $P_t$ subject to $\mathcal{R}_u$, $\mathcal{R}_b$, tags' EH,  and $\mathcal{R}_s$  & AO, SDR, SCA  \\ \hline
\end{tabular}
\begin{tablenotes}
      \scriptsize{
      \item $\mathcal{R}_{\min}$ - Minimum user rate. \quad  $\mathcal{R}_{u}$ - User rate. \quad $\mathcal{R}_b$ - Tag backscatter rate. \quad $\mathcal{R}_s$ - Sensing rate. \quad $P_t$ - BS transmit power. \quad  DAC - Digital-to-analog converter. }
    \end{tablenotes}
  \end{threeparttable}\vspace{-0mm}
\end{table*}

\textit{Notation}: 
Boldface lower case and upper case letters represent vectors and matrices, respectively. For matrix $\mathbf{A}$, $\mathbf{A}^{\rm{H}}$ and $\mathbf{A}^{\rm{T}}$ are Hermitian conjugate transpose and transpose of matrix $\mathbf{A}$, respectively. $\mathbf{I}_M$ denotes the $M$-by-$M$ identity matrix. The Euclidean norm,   absolute value, and expectation operators are denoted by $\|\cdot\|$ and $|\cdot|$, and $\mathbb{E}\{\cdot\}$, respectively. A circularly symmetric complex Gaussian (CSCG) random vector with mean $\boldsymbol{\mu}$ and covariance matrix $\mathbf{C}$ is denoted by $\sim \mathcal{C}\mathcal{N}(\boldsymbol{\mu},\mathbf{C})$. Besides, $\mathbb{C}^{M\times N}$ and ${\mathbb{R}^{M \times 1}}$ represent $M\times N$ dimensional complex matrices and $M\times 1$ dimensional real vectors, respectively. Further, $\mathcal{O}$ expresses the big-O notation. Finally, $\mathcal{L} \triangleq \{1,\ldots,L\}$,  $\mathcal{K} \triangleq \{1,\ldots,K\}$,  $\mathcal{L}_l \triangleq \mathcal{L}\setminus\{l\}$, $\mathcal{K}_k \triangleq \mathcal{K}\setminus\{k\}$, $\mathcal{L}_l' \triangleq \{1,\ldots,l-1\}$, and $\mathcal{L}_l'' \triangleq \{l+1,\ldots,L\}$.

\section{System Components}\label{Sec_system_modelA}
\subsection{System Model}
An RSMA-assisted ISABC system (Fig.~\ref{fig_SystemModel}) is considered. It consists of an FD BS as the primary transmitter having $M$ transmit and $N$ receiver uniform linear array (ULA) antennas, $L$ single-antenna users ($U_l$ denotes the $l$-th user), $K$ single-antenna tags ($T_k$ denotes the $k$-th tag), and a single-antenna reader. The BS antennas are spaced at half-wavelengths \cite{Zhenyao2023}. Tags use power-splitting (PS) protocol for EH and data backscattering (Section~\ref{sec_EH}).

The BS communicates and senses the surroundings using transmit beamforming, whereas tags use the BS signal for EH and data backscattering via the PS protocol. Cooperation is assumed between the BS and the backscatter reader via a control link \cite{positioningLTE}. Thus, the reader subtracts the direct signal, i.e., $\qf_0\qx$, from its received signal before decoding the tag data. This  SIC operation minimizes the interference from the BS.
The BS also extracts environmental insights from unintentionally received backscatter signals \cite{Diluka2023}. To reduce self-interference (SI), the BS uses two distinct antenna sets for transmission and reception \cite{Zhenyao2023}.  Time synchronization of the nodes is assumed for simplicity \cite{Long2020}.

\subsection{Channel Model}
This study employs block flat-fading channel models. During each fading block, $\mathbf{f}_0 \in \mathbb{C}^{M\times 1}$, $\mathbf{f}_l \in \mathbb{C}^{M\times 1}$, and $\mathbf{g}_{f,k} \in \mathbb{C}^{M\times 1}$, respectively, represent the channels between the BS and the reader, the BS and $U_l$, and the BS and $T_k$. The channels between $T_k$ and $U_l$, and $T_k$ and the reader are denoted by $v_{l,k}$ and $q_k$, respectively, while $\mathbf{g}_{b,k} \in \mathbb{C}^{N\times 1}$ represents the channel between $T_k$ and the BS receiver antennas. Within these channels, pure communication channels, i.e., $\mathbf{f}_0$, $\mathbf{f}_l$, $v_{l,k}$ and $q_k$, are modeled as Rayleigh fading and given as $\mathbf{a} = \zeta_a^{1/2} \tilde{\mathbf{a}}$, where $\mathbf{a} \in \{\mathbf{f}_0, \mathbf{f}_l, v_{l,k}, q_k\}$ and $\zeta_a$ is the large-scale path-loss and shadowing,  which stays constant for several coherence intervals. Moreover, $\tilde{\mathbf{a}} \sim \mathcal{CN}(\mathbf{0}, \mathbf{I}_{A})$ accounts for the small-scale Rayleigh fading, where $A\in \{M,1\}$.

Conversely, following the echo signal representation in multiple-input and multiple-output (MIMO) radar systems, the channels between the BS and tags, i.e., $\mathbf{g}{f,k}$ and $\mathbf{g}{b,k}$, are modeled as line-of-sight (LoS) channels \cite{Zhenyao2023}.
The transmit/receiver array steering vectors to the direction $\theta_k$ are thus denoted as
\begin{eqnarray}
    \mathbf{b}(\theta_k)= \sqrt{\frac{\zeta_b}{B}} \left[1, e^{j\pi \sin(\theta_k)}, \ldots, e^{j\pi (B-1) \sin(\theta_k)} \right]^{\rm{T}},
\end{eqnarray}
where $\mathbf{b} \in \{\mathbf{g}_{f,k}, \mathbf{g}_{b,k}\}$, $B\in\{M, N\}$, $\theta_k$ is $T_k$'s direction for the BS and the reader direction, and $\zeta_b$ denotes the path-loss. Finally, $\mathbf{G}_{\rm{SI}} \in  \mathbb{C}^{M\times N}$ is the SI channel matrix between the transmitter and the receiver antennas of the BS and is modeled as a Rician fading channel with a Rician factor of  $K_{\rm{SI}}$ \cite{Mohammadi2023, Diluka2024CFFD, Kim2021}. 

CSI is essential for  BS beamforming, user/reader data decoding, and other operations. To acquire CSI, the proposed system is assumed to utilize the time-division duplexing mode. That means coherence time is divided into two slots, one designated for channel estimation and one for data transmission.  In the first slot, pilot signals are transmitted,  which can yield high-quality CSI. Passive tags, however, have restricted pilot transmission capabilities \cite{ZhangZLK199}. Instead, they can modulate pilot sequences into the BS signal and reflect them to the reader, allowing the channel estimation via least square (LS)/MMSE estimate and deep leaning  \cite{Zargari10320395,rezaei2023timespread}. Although not the focus of this study, backscatter CSI estimation is complex, and recent results  \cite{rezaei2023timespread} are promising. Thus, full  CSI availability is a reasonable assumption. 

\begin{figure}[!t]
    \centering 
    \def\svgwidth{220pt} 
    \fontsize{8}{8}\selectfont 
    \graphicspath{{Figures/}}
    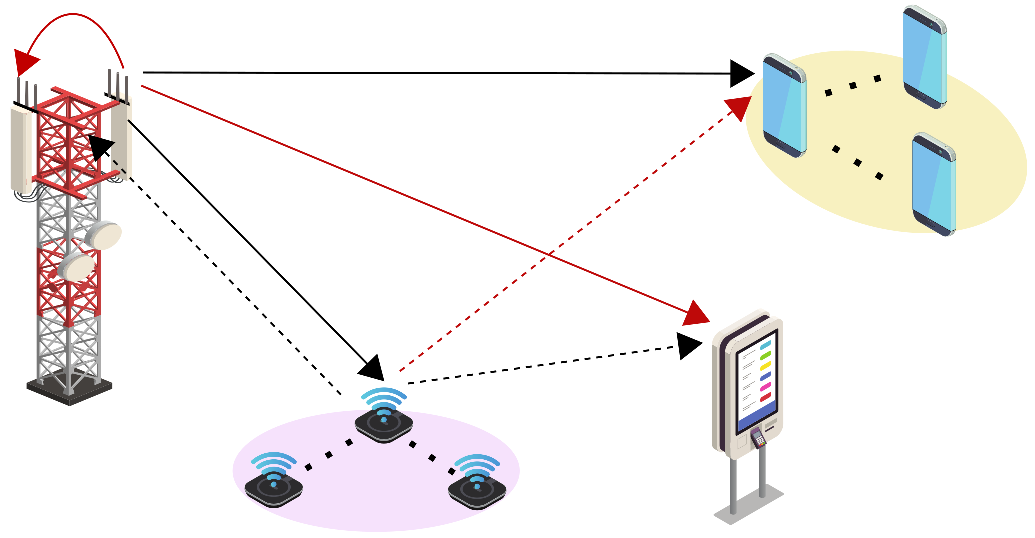  
    \caption{An RSMA-assisted ISABC system setup. Red lines denote interference links.} \label{fig_SystemModel}\vspace{-0mm} 
\end{figure}

\subsection{Transmission Model}
The  BS serves the primary users and simultaneously receives the tag-reflected signals. It uses RSMA to send messages to the primary users. This means that the message $d_l$ of $U_l$ is split into a common part $d_{c,l}$ and a private part $d_{p,l}$, i.e., $d_l = \{ d_{c,l}, d_{p,l}\}$ for $l \in \mathcal{L}$. The common parts of all users, i.e., $\{d_{c,1}, \ldots, d_{c,L} \}$ are jointly encoded into the common stream $x_c$ while the private parts, i.e., $\{d_{p,1}, \ldots, d_{p,L} \}$  are  encoded into private streams $\{x_{1}, \ldots, x_{L} \}$, which are assumed to be independent \cite{Clerckx2016, Galappaththige2024RSMA}. The common and private streams are then linearly precoded  using the precoders $\mathbf{w}_c \in \mathbb{C}^{M \times 1}$ and $\mathbf{w}_l \in \mathbb{C}^{M \times 1}$ for $l \in \mathcal{L}$, respectively \cite{Clerckx2016, Galappaththige2024RSMA}.

Conversely, the BS signal also includes  sensing signal $\mathbf{s} \in \mathbb{C}^{M \times 1}$ with the covariance matrix $\mathbf{S} \triangleq \mathbb{E}\{\mathbf{s} \mathbf{s}^{\rm{H}} \}$. This extends the degrees-of-freedom of the transmitted signal  $\mathbf{x}$ to achieve enhanced sensing performance \cite{Zhenyao2023}. The BS transmitted signal can thus  be expressed as
\begin{eqnarray}\label{Transmit_data}
    \mathbf{x} = \mathbf{w}_c x_c + \sum\nolimits_{j\in \mathcal{L}} \mathbf{w}_{j} x_j +\mathbf{s}.
\end{eqnarray}
In \eqref{Transmit_data},  $x_c$, $\{x_l\}_{l\in \mathcal{L}}$, and $\mathbf{s}$ are mutually independent. 

To backscatter data, tag $T_k$ uses a multi-level ($\tilde{M}$-ary)  modulation scheme \cite{Diluka2022, Rezaei2023Coding}. Tag data symbol $c_k$, with $\mathbb{E}\{\vert c_k \vert^2\} = 1$, is thus selected from a $\tilde{M}$-ary constant-envelope modulation. Consequently, the users receive the BS signal and the tags' backscattered signals. Assuming the propagation delay difference for all signals is negligible \cite{Liao2020}, the received signal at $U_l$ is given by
\begin{eqnarray}
    y_l' = \mathbf{f}_l^{\rm{H}} \mathbf{x} + \sum\nolimits_{k\in \mathcal{K}} \sqrt{\alpha_k} \mathbf{h}_{l,k}^{\rm{H}} \mathbf{x} c_k + z_u,
\end{eqnarray}
where the first and second terms are the direct-link signals (BS-to-user) and the backscatter-link signals (BS-to-tags-to-user), respectively, $z_u \sim \mathcal{CN}(0,\sigma^2)$ is the additive white Gaussian noise (AWGN) at $U_l$, with $0$ mean and $\sigma^2$ variance, and $\mathbf{h}_{l,k}$ is the effective backscatter channel at $U_l$ through $T_k$, i.e., $\mathbf{h}_{l,k}= \mathbf{g}_{f,k}(\theta_k) v_{l,k}$. Because primary users and BS are integral to the primary networks, a pre-existing connection permits the exchange of information, including sensing waveforms, via a control link \cite{positioningLTE}. It is thus assumed that the users are aware of the sensing waveform, however, partially eliminate it before decoding data. This is due to the limited dynamic range of the users, even if the channels are perfectly known. After the detecting signal is removed, the received signal is given as
\begin{eqnarray}\label{eqn_rx_user}
    y_l &=& \mathbf{f}_l^{\rm{H}} \mathbf{w}_c x_c + \sum\nolimits_{j\in\mathcal{L}} \mathbf{f}_l^{\rm{H}} \mathbf{w}_j x_j + \sqrt{\delta_s} \mathbf{f}_l^{\rm{H}} \mathbf{s} \nonumber \\
    &&\!\!\!\!\!\!\!\!\!\!\!\!\!\!\!\!\! + \sum\nolimits_{k\in \mathcal{K}} \sqrt{\alpha_k} \mathbf{h}_{l,k}^{\rm{H}} \bigg(\mathbf{w}_c x_c + \sum\nolimits_{j\in \mathcal{L}} \mathbf{w}_{j} x_j +\mathbf{s} \bigg) c_k + z_u, \quad
\end{eqnarray}
where $\delta_s \in [0,1]$ accounts for the SIC quality for the sensing signal. 

As the reader and the BS cooperate, the reader is aware of the BS transmitted signal. The reader thus applies SIC before decoding tags data. The post-processed received signal at the reader is given as
\begin{eqnarray}\label{eqn_rx_reader}
    y_r &=& \sqrt{\delta_c} \mathbf{f}_0^{\rm{H}} \mathbf{w}_c x_c + \sqrt{\delta_p} \sum\nolimits_{j\in\mathcal{L}} \mathbf{f}_0^{\rm{H}} \mathbf{w}_j x_j + \sqrt{\delta_s} \mathbf{f}_0^{\rm{H}} \mathbf{s}  \nonumber \\ 
    &&\!\!\!\!\!\!\!\!\!\!\!\!\!\!\!\! + \sum\nolimits_{k\in \mathcal{K}} \sqrt{\alpha_k} \mathbf{h}_{k}^{\rm{H}} \bigg(\mathbf{w}_c x_c + \sum\nolimits_{j\in \mathcal{L}} \mathbf{w}_{j} x_j +\mathbf{s} \bigg) c_k + z_r, \quad
\end{eqnarray}
where $\mathbf{h}_k = \mathbf{g}_{f,k}(\theta_k) q_{k}$ and $z_r \sim \mathcal{CN}(0,\sigma^2)$ is the AWGN at the reader. Moreover,  $\delta_c \in [0,1]$ and $\delta_p \in [0,1]$ are the SIC qualities for the common stream and private streams, respectively. 

Tag backscattered signals reach the users, reader, and BS. Thus, those signals enable the BS to retrieve environmental information  \cite{Diluka2023}. The BS received signal, i.e., $\mathbf{y}_b \in \mathbb{C}^{N\times 1}$, is given as
\begin{eqnarray}\label{eqn_rx_BS}
    \mathbf{y}_b = \sum\nolimits_{k\in \mathcal{K}} \sqrt{\alpha_k} \mathbf{G}_k (\theta_k) \mathbf{x} c_k + \mathbf{G}_{\rm{SI}}^{\rm{H}} \mathbf{x}  + \mathbf{z}_b,
\end{eqnarray}
where $\mathbf{G}_k (\theta_k) \triangleq \mathbf{g}_{b,k}(\theta_k) \mathbf{g}_{f,k}^{\rm{H}}(\theta_k)$,   $\sqrt{\alpha_k} \mathbf{G}_k (\theta_k) \mathbf{x} c_k$ is $T_k$'s ($k$-th target) reflection, and $\mathbf{z}_b \sim \mathcal{CN}(\mathbf{0},\sigma^2 \mathbf{I}_N)$ is the AWGN at the BS. In \eqref{eqn_rx_BS}, the second term is the SI at the BS receiver due to simultaneous transmission and reception. Due to the limited dynamic range of the receiver, SI cannot be canceled perfectly even with the perfect CSI of the SI channel \cite{Mohammadi2023, Diluka2024CFFD}. After the SI cancellation, the BS applies the receiver beamformer, $\mathbf{u}_k \in \mathbb{C}^{N\times 1}$ for $k \in \mathcal{K}$, to the received signal \eqref{eqn_rx_BS} to acquire the desired reflected signal of $T_k$. The post-processed signal for acquiring $T_k$ sensing information is thus given as
\begin{eqnarray}\label{eqn_T_kSens}
    {y}_{b,k} &=& \sqrt{\alpha_k} \mathbf{u}_k^{\rm{H}} \mathbf{G}_k (\theta_k) \mathbf{x} c_k + \sum\nolimits_{i\in \mathcal{K}_k} \sqrt{\alpha_i} \mathbf{u}_k^{\rm{H}} \mathbf{G}_i (\theta_i) \mathbf{x} c_i \nonumber \\
    &&+ \sqrt{\beta} \mathbf{u}_k^{\rm{H}} \mathbf{G}_{\rm{SI}}^{\rm{H}} \mathbf{x} + \mathbf{u}_k^{\rm{H}} \mathbf{z}_b,  
\end{eqnarray}
where $\mathcal{K}_k \triangleq \mathcal{K}\setminus\{k\}$ and $0<\beta \ll 1$ is a constant denoting the SI cancellation ability of the FD BS, i.e., the SI cancellation quality. FD radios can use SI cancellation techniques involving hardware techniques, analog domain cancellation, and digital domain cancellation of residual SI \cite{Mohammadi2023, Diluka2024CFFD}. These can also be adapted for ISAC/ISABC systems.

\subsection{Tags' EH}\label{sec_EH}
As mentioned, passive tags do not generate RF signals and rely entirely on EH to power their essential functions. To simultaneously perform EH and data transfer operations, tags split the incident RF signal \cite{Zhang2013, Hakimi2023} into two streams. With this PS, $T_k$ reflects a fraction of the incident RF  power, i.e., $p_{k}^{\rm{in}}=|\qg^{\rm{H}}_{f,k}\qw|^2+\qg^{\rm{H}}_{f,k}\qS\qg_{f,k}$, and harvests the remainder based on the reflection coefficient, $\alpha_k$ \cite{Zhang2013}. In particular, $T_k$ reflects $\alpha_k p_{k}^{\rm{in}}$ or data transmission, while absorbing the reminder, $(1-\alpha_k) p_{k}^{\rm{in}}$, for EH. 

The harvested power at $T_k$, $p_{k}^{\rm{h}}$, can be modeled as a linear or nonlinear function of $p_{k}^{\rm{in}}$. The harvested power of the widely used linear model is given as $p_{k}^{\rm{h}} = \eta(1-\alpha_k) p_{k}^{\rm{in}}$, where $\eta \in (0,1]$ is the power conversion efficiency. Despite its simplicity, the linear EH model overlooks nonlinear characteristics of actual EH circuits, such as saturation and sensitivity \cite{Boshkovska2015}. As a remedy, a parametric nonlinear EH model based on the sigmoid function has been frequently employed \cite{Boshkovska2015}. It models the total harvested power at $T_k$ as $p_{k}^{\rm{h}} = \Phi((1-\alpha_k)p_{k}^{\rm{in}})$, where $\Phi(\cdot)$ is a function representing non-linear effects \cite[eq. (4)]{Boshkovska2015}.

Nevertheless, our problem formulation can handle linear and non-linear models within one unified framework. The activation threshold, $p_b$, is an essential parameter whether a linear or non-linear model is used. It is the minimal amount of power necessary to activate the EH circuit, which is around  \qty{-20}{\dB m} for off-the-shelf passive tags \cite{Diluka2022}. This means that the harvested power must exceed  $p_b$ to activate the tag, i.e., $(1-\alpha_k) p_{k}^{\rm{in}} \geq  p_b'$, where $p_b' \triangleq \Phi^{-1}(p_b)$. The following utilizes the nonlinear EH model to formulate the optimization problem and design the resource allocation algorithm. 

\section{Communication and Sensing Performance}

This section derives the communication and sensing performance of the proposed RSMA-assisted ISABC system. 

\subsection{Primary Communication Performance}
The users first decode the common message by treating private and backscatter signals as interference. From \eqref{eqn_rx_user}, the rate of $U_l$ for decoding common data  is 
\begin{eqnarray}
    \mathcal{R}_{c,l} =  {\rm{log}}_2(1+\gamma_{c,l}),
\end{eqnarray}
where $\gamma_{c,l}$ is the received SINR at $U_l$ for decoding $x_c$, and given in  \eqref{SINR_ue_common}.
\begin{figure*}
\begin{eqnarray}\label{SINR_ue_common}
    \gamma_{c,l} = \frac{\vert \mathbf{f}_l^{\rm{H}} \mathbf{w}_c \vert^2} {\sum_{j \in \mathcal{L}}\vert \mathbf{f}_l^{\rm{H}} \mathbf{w}_j \vert^2 + \delta_s \vert \mathbf{f}_l^{\rm{H}} \mathbf{s} \vert^2 + \sum_{k \in \mathcal{K}} \alpha_k  \left(\vert \mathbf{h}_{l,k}^{\rm{H}}  \mathbf{w}_c \vert^2  + \sum_{j \in \mathcal{L}}\vert \mathbf{h}_{l,k}^{\rm{H}} \mathbf{w}_j \vert^2 + \vert \mathbf{h}_{l,k}^{\rm{H}}  \mathbf{s} \vert^2 \right)+ \sigma^2}
\end{eqnarray}

\hrulefill

\vspace{-0mm}

\end{figure*}
To guarantee that all users successfully decode the common stream, the rate of decoding the common stream $x_c$ should not exceed $\mathcal{R}_c = {\rm{min}}_{l \in \mathcal{L}}\mathcal{R}_{c,l}$ \cite{Xu2021}. As $\mathcal{R}_c$ is shared by $L$ users, it follows that  $\sum_{l \in \mathcal{L}} C_l = \mathcal{R}_c$, where $C_l$ is the portion of the common rate at $U_l$ transmitting $d_{c,l}$.

Next, $U_l$ subtracts the decoded common message from the received signal and decodes its desired private message, $x_{l}$. The private rate of $U_l$ is thus given by
\begin{eqnarray}
     \mathcal{R}_{p,l}' =  {\rm{log}}_2(1 + \gamma_{p,l}),
 \end{eqnarray}
 where $\gamma_{p,l}$ is the received SINR at $U_l$ for decoding $x_{l}$ and given in \eqref{SINR_ue_private}. In \eqref{SINR_ue_private} $\delta_c$ accounts for SIC qualities for removing the common signal. 
\begin{figure*}
\begin{eqnarray}\label{SINR_ue_private}
    \gamma_{p,l} = \frac{\vert \mathbf{f}_l^{\rm{H}} \mathbf{w}_l \vert^2}  {\delta_c \vert \mathbf{f}_l^{\rm{H}} \mathbf{w}_c \vert^2 + \sum_{j \in \mathcal{L}_l}\vert \mathbf{f}_l^{\rm{H}} \mathbf{w}_j \vert^2 +  \delta_s \vert \mathbf{f}_l^{\rm{H}} \mathbf{s} \vert^2 + \sum_{k \in \mathcal{K}} \alpha_k  \left( \vert \mathbf{h}_{l,k}^{\rm{H}}  \mathbf{w}_c \vert^2  + \sum_{j \in \mathcal{L}}\vert \mathbf{h}_{l,k}^{\rm{H}} \mathbf{w}_j \vert^2 + \vert \mathbf{h}_{l,k}^{\rm{H}}  \mathbf{s} \vert^2 \right)+ \sigma^2}
\end{eqnarray}

\vspace{-0mm}

\end{figure*}
The total achievable rate of $U_l$, including the portion of common rate transmitting $d_{c,l}$ and private rate transmitting $d_{p,l}$, is thus expressed as 
\begin{eqnarray}
    \mathcal{R}_{p,l} = C_l + \mathcal{R}_{p,l}'.
\end{eqnarray}

\subsection{\bc Performance}
The reader decodes the tags' data. Using \eqref{eqn_rx_reader}, the reader decodes $T_k$'s data at the rate 
\begin{eqnarray}
    \mathcal{R}_{t,k} \approx {\rm{log}}_2(1 + \gamma_{t,k}),
\end{eqnarray}
where $\gamma_{t,k}$ is the $T_k$'s SINR at the reader and given in \eqref{SINR_tag}.
\begin{figure*}
\begin{eqnarray}\label{SINR_tag}
    \gamma_{t,k} = \frac{\alpha_k  \left( \vert \mathbf{h}_{k}^{\rm{H}} \mathbf{w}_c\vert^2 + \sum_{j \in \mathcal{L}} \vert \mathbf{h}_{k}^{\rm{H}} \mathbf{w}_j\vert^2 + \vert \mathbf{h}_{k}^{\rm{H}} \mathbf{s} \vert^2 \right)}{ \delta_c \vert \mathbf{f}_0 \mathbf{w}_c \vert^2 + \delta_p \sum_{j \in \mathcal{L}} \vert \mathbf{f}_{0}^{\rm{H}} \mathbf{w}_j\vert^2 + \delta_s \vert \mathbf{f}_{0}^{\rm{H}} \mathbf{s} \vert^2 + \sum_{i \in \mathcal{K}_k} {\alpha_i} \left( \vert \mathbf{h}_{i}^{\rm{H}} \mathbf{w}_c\vert^2 + \sum_{j \in \mathcal{L}} \vert \mathbf{h}_{i}^{\rm{H}} \mathbf{w}_j\vert^2  + \vert \mathbf{h}_{i}^{\rm{H}} \mathbf{s}\vert^2 \right)+ \sigma^2 }
\end{eqnarray}

\vspace{-0mm}

\end{figure*}

\subsection{Sensing Performance}
Sensing performance is typically measured by the transmit beampattern gain or mean squared error of the transmit beampattern \cite{He2022, Stoica2007}. Although simple, they do not account for the receiver's beam pattern or multi-target interference. This may cause ambiguity in multi-target detection due to interference between the targets' reflected signals \cite{Zhenyao2023, Cui2014}. 
{Conversely, the CRB solely focuses on the lower bound of estimation error (i.e., accuracy) {\cite{Tang2019, Cui2014}}. CRB indicates how precisely a parameter (e.g., distance, velocity) can be estimated, but it doesn’t reflect how much information about the environment is being captured over time.} 
To overcome these limitations in such sensing measures, the sensing SINR has been suggested to assess sensing performance \cite{Zhenyao2023, Cui2014}.  The detection probability of a target (tag) is proportional to its sensing SINR \cite{Zhenyao2023, Cui2014}. In addition, the sensing SINR enables target detection through both transmit and receiver beamforming, reducing interference between targets \cite{Zhenyao2023, Cui2014}. Due to these benefits,   this study employs sensing SINR to measure sensing performance.

Sensing is done by the BS, which utilizes the backscattered signal of $T_k$. The BS applies the SI cancellation and the receiver beamformer, $\mathbf{u}_k$, to received signal \eqref{eqn_rx_BS} to capture $T_k$'s reflected signal. From  \eqref{eqn_T_kSens}, the sensing rate of $T_k$ is obtained as 
\begin{eqnarray}
    \mathcal{R}_{s,k} \approx {\rm{log}}_2(1 + \Upsilon_{k}),
\end{eqnarray}
where $\Upsilon_{k}$ is the sensing SINR of $T_k$ and given in \eqref{eqn_sens_SINR_tag}.
\begin{figure*}
\begin{eqnarray}\label{eqn_sens_SINR_tag}
    \Upsilon_{k}\!\!\!\!\! &=&\!\!\!\!\!\frac{\alpha_k \mathbb{E} \left\{ \vert \mathbf{u}_k^{\rm{H}} \mathbf{G}_k (\theta_k) \mathbf{x} \vert^2 \right\} }{\!\!\!\sum\limits_{i\in \mathcal{K}_k} \!\!\alpha_i \mathbb{E} \left\{ \vert \mathbf{u}_k^{\rm{H}} \mathbf{G}_i(\theta_i) \mathbf{x} \vert^2 \right\}\! +\! \beta \mathbb{E} \left\{ \vert \mathbf{u}_k^{\rm{H}} \mathbf{G}_{\rm{SI}}^{\rm{H}} \mathbf{x} \vert^2 \right\} \! + \!\mathbb{E} \left\{ \vert \mathbf{u}_k^{\rm{H}} \mathbf{z}_b \vert^2 \right\}   } \!= \!\frac{\alpha_k \mathbf{u}_k^{\rm{H}} \mathbf{G}_k (\theta_k) \mathbf{R}_x  \mathbf{G}_k^{\rm{H}} (\theta_k) \mathbf{u}_k  }{ \mathbf{u}_k^{\rm{H}}\! \left( \! \sum\limits_{i\in \mathcal{K}_k}\!\! \alpha_i  \mathbf{G}_i(\theta_i) \mathbf{R}_x \mathbf{G}_i^{\rm{H}} \!+\! \beta \mathbf{G}_{\rm{SI}} \mathbf{R}_x   \mathbf{G}_{\rm{SI}}^{\rm{H}}  \!+ \!\sigma^2 \mathbf{I}_N  \right) \!\mathbf{u}_k  }  \qquad
\end{eqnarray}
\hrulefill

\vspace{-0mm}

\end{figure*}
$\mathbf{R}_x \triangleq \mathbb{E} \{\mathbf{x} \mathbf{x}^{\rm{H}} \} = \mathbf{w}_c\mathbf{w}_c^{\rm{H}} + \sum_{j\in \mathcal{L}} \mathbf{w}_j\mathbf{w}_j^{\rm{H}} + \mathbf{S}$ in \eqref{eqn_sens_SINR_tag} is the BS transmitted signal covariance matrix \cite{Zhenyao2023}.

\begin{rem}
{The sensing rate, measured in bps/Hz, refers to the rate at which useful information is acquired from the environment through sensing {\cite{Tang2019, Zhenyao2023, Cui2014, Diluka2023}}, i.e., it measures how effectively the system gathers information about the target, such as location, velocity, or other properties. Conversely, the sensing rate is a comprehensive, real-time measure of sensing performance that is consistent with communication metrics, i.e., rate, making it ideal for optimizing the trade-off between sensing and communication {\cite{Tang2019, Zhenyao2023, Cui2014, Diluka2023}}. Additionally, it also aids in enhancing the performance of specific sensing metrics such as the CRB, which assesses the precision of parameter estimates but may not always be the primary concern in practical ISAC/ISABC deployments.} 
\end{rem}

\section{Problem Formulation}
Herein, the BS transmit power of the proposed RSMA-assisted ISABC network is optimized.  In particular, our objective is to minimize it by jointly optimizing the BS receive beamforming, $\{\qu_k\}_{k\in \mathcal{K}}$, the BS transmit beamforming, $\qw_c$, $\{\qw_l\}_{l\in \mathcal{L}}$, covariance matrix  $ \qS$, the tag reflection coefficients,  $\{\alpha_k\}_{k\in \mathcal{K}}$, and common rates of the users, $\{C_l\}_{l\in \mathcal{L}}$. The set of all the  optimization variables is denoted  as $\mathcal{A} = \left\{ \{\qu_k\}_{k\in \mathcal{K}}, \qw_c, \{\qw_l\}_{l\in \mathcal{L}}, \qS \succeq 0,   \{\alpha_k\}_{k\in \mathcal{K}}, \{C_l\}_{l\in \mathcal{L}}  \right\}$. 

The communication rates for the users and the reader must exceed their thresholds. The sensing and EH functionalities of the tags must be met. The optimization problem is thus formulated as follows:
\begin{subequations}\label{P1_prob}
    \begin{eqnarray}
       \mathbf{P}_1:
       \underset {\mathcal{A}}{\rm{min}} &&    \Vert \mathbf{w}_c  \Vert^2+ \sum\nolimits_{l \in \mathcal{L}} \Vert \mathbf{w}_l  \Vert^2 + \tr{(\mathbf{S})}, \label{P1_obj} \\
       \text{s.t.} && \mathcal{R}_{s,k} \geq \mathcal{R}_{s,k}^{\rm{th}},  ~ \forall k, \label{P1_sens_rate}\\
       &&  \mathcal{R}_{t,k} \geq \mathcal{R}_{t,k}^{\rm{th}},  ~ \forall k, \label{P1_tag_Rate}\\
       && C_l + \mathcal{R}_{p,l}' \geq \mathcal{R}_{p,l}^{\rm{th}} , ~\forall  l, \label{P1_pvt_rate}\\
       && \sum\nolimits_{l \in \mathcal{L}} C_l \leq \mathcal{R}_{c}, \label{P1_ComRate_decod}\\
       &&  0 \leq C_l,  ~ \forall l, \label{P1_cl}\\ 
       &&  {p}_k^{\rm{in}} \geq \frac{\Phi^{-1}(p_b)}{1-\alpha_k}, ~ \forall k, \label{P1_tag_EH}\\
       &&  \mathbf{S}  \succeq 0, \label{P1_S} \\
       &&\Vert \mathbf{u}_k\Vert^2 = 1, ~ \forall k,  \label{P1_detector}\\
       &&  0 <\alpha_k < 1, ~ \forall  k , \qquad \label{P1_alpha}
    \end{eqnarray}
\end{subequations}
where constraints \eqref{P1_sens_rate} and \eqref{P1_tag_Rate} ensure the required sensing and communication rate requirement of each tag at the BS and the reader, respectively, with  $\mathcal{R}_{s,k}^{\rm{th}}$ and $\mathcal{R}_{t,k}^{\rm{th}}$ representing the corresponding targeted sensing and \bc rates. On the other hand, \eqref{P1_pvt_rate} sets the minimum rate requirements for the user to decode private data, in which $\mathcal{R}_{p,l}^{\rm{th}}$ denotes the targeted private rate of the respective user.  Conditions \eqref{P1_ComRate_decod} and \eqref{P1_cl} are necessary to guarantee the successful decoding of the common stream. Besides, \eqref{P1_tag_EH} guarantees the minimum tag power requirements for EH. Constraint \eqref{P1_detector} is the normalization constraint for the BS reception filter. Finally, \eqref{P1_alpha} is the range of the tag reflection coefficient.  

Since the user/tag communication/sensing rates are monotonically increasing functions of their arguments, i.e., SINRs, they can be replaced with respective SINRs before proceeding to the proposed solution. The optimization problem, $\mathbf{P}_1$, is thus equivalently formulated as follows: 
\begin{subequations}\label{P2_prob}
    \begin{eqnarray}
      \mathbf{P}_2:
       \underset {\mathcal{A}}{\rm{min}} &&   \Vert \mathbf{w}_c  \Vert^2+ \sum\nolimits_{l \in \mathcal{L}} \Vert \mathbf{w}_l  \Vert^2 + \tr{(\mathbf{S})}, \label{P2_obj} \\
       \text{s.t.} && \Upsilon_{k} \geq  \Upsilon_{k}^{\rm{th}},  ~ \forall k, \label{P2_sens_rate}\\
       &&  \gamma_{t,k} \geq \Gamma_{t,k}^{\rm{th}},  ~ \forall k, \label{P2_tag_Rate}\\
       && \eqref{P1_pvt_rate}-\eqref{P1_alpha}, \label{P2_18e}
    \end{eqnarray}
\end{subequations}
where $\Upsilon_{k}^{\rm{th}} \triangleq 2^{\mathcal{R}_{s,k}^{\rm{th}}} - 1$ and $\Gamma_{t,k}^{\rm{th}} \triangleq 2^{\mathcal{R}_{t,k}^{\rm{th}}} - 1$ are the respective sensing/communication SINR thresholds. 

\section{Proposed Solution}\label{pro_solu}
Since problem $\mathbf{P}_2$ is non-convex, widely available convex algorithms cannot be directly applied.  Thus, an AO strategy \cite{bezdek2003convergence}
is adopted to handle it, where the problem is decomposed into three sub-problems \cite{bertsekas1997nonlinear}. It turns out that the objective value of $ \mathbf{P}_2 $ is the same regardless of the solutions of the first and third sub-problems. Thus, they are feasibility problems. Feasibility problems aim to find any solution that satisfies the constraints. However, these are converted into optimization problems with explicit objectives for more effective solutions.

\subsection{Sub-Problem 1: Optimizing over $\mathbf{u}_k$}\label{Sec_receiver_combine}
This develops the optimal reception filters of the  BS  while keeping other variables constant. For given $\{\mathbf{w}_c, \{\mathbf{w}_l\}_{l\in \mathcal{L}}, \mathbf{S}, \{\alpha_k \}_{k\in \mathcal{K}}, \{C_l\}_{l\in \mathcal{L}}\}$, $\mathbf{P}_2$ becomes a feasibility problem. Therefore, any feasible value of $\qu_k$ that complies with \eqref{P2_sens_rate} and \eqref{P1_detector} can be a potential solution.

Although vector $\qu_k$ may not have a direct impact on reducing the BS transmit power, it is crucial for optimizing the sensing SINR at the BS for each tag. Our methodology prioritizes the optimization of $\qu_k$ to meet the sensing performance criteria and indirectly contribute to transmit power reduction. Enhanced SINR for tag signals reduces the need for higher transmit power, thus aligning with our overarching aim of power minimization \cite{Stanczak2008book, Wan2016}.

Using the unique structure of the sensing SINR for each tag \eqref{eqn_sens_SINR_tag},  this sub-problem is transformed into a generalized Rayleigh quotient problem, providing closed-form optimal combiner vectors \cite{Stanczak2008book}. To this end, $\mathbf{P}_2$ is reduced to the following optimization problem:
\begin{subequations}\label{Pu_prob}
    \begin{eqnarray}
        \mathbf{P}_{\mathrm{u}}:
        \underset { \mathbf{u}_k}{\text{max}} && \nonumber \\
        &&\!\!\!\!\!\!\!\!\!\!\!\!\!\!\!\!\!\!\!\!\!\!\!\!\!\!\!\!\! \frac{\alpha_k \qu_k^{\rm{H}} \qG_k \qR_x \qG^{\rm{H}}_k \qu_k}{\qu^{\rm{H}}_k\left(\sum\limits_{i\in \mathcal{K}_k} \alpha_i \qG_i \qR_x \qG_i^{\rm{H}} + \beta \qG_{\rm{SI}} \qR_x \qG_{\rm{SI}}^{\rm{H}}  +  \sigma^2_k \qI_N \right)\qu_k}, \label{Pu_obj} \qquad\\
        \text{s.t.} && \Vert \mathbf{u}_k\Vert^2 = 1, \quad \forall  k. \label{Pu_detector}
    \end{eqnarray}
\end{subequations}
Next, the objective function, \eqref{Pu_obj}, is rearranged to convert it into the following optimization problem:
\begin{subequations}\label{Pu_prob1}
    \begin{eqnarray}
        \mathbf{P}_{\mathrm{u}1}:
        \underset { \mathbf{u}_k}{\text{max}} && \frac{ \mathbf{u}_k^{\rm{H}}\tilde{\mathbf{G}}_{k} \tilde{\mathbf{G}}_{k}^{\rm{H}} \mathbf{u}_k}{ \mathbf{u}_k^{\rm{H}} \mathbf{Q}_k\mathbf{u}_k }, \label{Pu_obj1} \qquad\\
        \text{s.t.} && \eqref{Pu_detector},
    \end{eqnarray}
\end{subequations}
where $\mathbf{Q}_k \triangleq \sum_{i \in \mathcal{K}_k} \!{\alpha_i} \mathbf{G}_{i} \qR_x \qG_{i}^{\rm{H}} + \beta \mathbf{G}_{\rm{SI}}\mathbf{R}_x \mathbf{G}_{\rm{SI}}^{\rm{H}}+ \sigma^2 \mathbf{I}_N$, and $\tilde{\mathbf{G}}_{k} = \sqrt{\alpha_k} \mathbf{G}_{k} \left(\mathbf{w}_c + \sum_{j\in \mathcal{L}} \mathbf{w}_{j}  +\mathbf{s}  \right)$. Problem $\mathbf{P}_{\mathrm{u}1}$ thus becomes a generalized Rayleigh ratio quotient problem \cite{Stanczak2008book}. The optimal combiner vector for $T_k$ is characterized by the Wiener or MMSE filter \cite{Hakimi2023} and is given as  
\begin{eqnarray}\label{eqn_optimal_u}
    \mathbf{u}_k^* = \frac{\mathbf{Q}_k^{-1} \tilde{\mathbf{G}}_{k}}{\Vert \mathbf{Q}_k^{-1} \tilde{\mathbf{G}}_{k} \Vert}, \quad  \forall k. 
\end{eqnarray}

\subsection{Sub-Problem 2: Optimizing over $\mathbf{w}_c$, $\mathbf{w}_l$, $\mathbf{S}$, and $C_l$}
This  optimizes for the BS transmit beamforming, i.e., $\mathbf{w}_c$, $\{\mathbf{w}_l\}_{l\in\mathcal{L}}$, and $\mathbf{S}$, and users’ common rate, i.e., $\{C_l\}_{l\in \mathcal{L}}$ for given $\{\{\qu_k\}_{k\in \mathcal{K}}, \{\alpha\}_{k\in \mathcal{K}}\}$. However, owing to the interference within the SINRs, Constraints $\mathbf{P}_{2}$ become non-convex for the BS transmit beamformers. The semi-definite relaxation (SDR) method is employed to overcome this barrier. Thus, $\mathbf{P}_2$ is reformulated as
\begin{subequations}\label{Pw_prob}
    \begin{eqnarray}
       \mathbf{P}_{\mathrm{w}}:
       \underset {\mathbf{w}_c, \mathbf{w}_l, \mathbf{S}, C_l }{\rm{min}} &&    \Vert \mathbf{w}_c  \Vert^2+ \sum\nolimits_{l \in \mathcal{L}} \Vert \mathbf{w}_l  \Vert^2 + \tr{(\mathbf{S})}, \label{Pw_obj} \\   \text{s.t.}  && \eqref{P2_sens_rate}-\eqref{P2_tag_Rate},~ \eqref{P1_pvt_rate}-\eqref{P1_S}. \qquad
    \end{eqnarray}
\end{subequations}
$\mathbf{P}_{\mathrm{w}}$  can be effectively addressed  by using the SDR method \cite{so2007approximating, Qingqing2019}. Define  matrices $\mathbf{W}_c \triangleq \mathbf{w}_c \mathbf{w}_c^{\rm{H}}$ and $\mathbf{W}_l \triangleq \mathbf{w}_l \mathbf{w}_l^{\rm{H}}$ for $l\in \mathcal{L}$. Here, $\mathbf{W}_c$ and $\mathbf{W}_l$ are semi-definite matrices with rank one constraints, i.e., ${\rm{Rank}}(\mathbf{W}_c) = 1$ and ${\rm{Rank}}(\mathbf{W}_l) = 1$. By relaxing the highly non-convex rank one constraints, $\mathbf{P}_{\mathrm{w}}$ can be reformulated into a conventional semi-definite programming (SDP) problem. However, as constraints \eqref{P1_pvt_rate} and \eqref{P1_ComRate_decod} are still non-convex, they are next converted into tractable convex forms.  Define 
\begin{subequations}
\begin{align}
    A_l &\triangleq \tr(\qf_l \qf_l^{\rm{H}} \qW_l), \\
    B_l &\triangleq \delta_c \tr(\qf_l \qf_l^{\rm{H}} \qW_c) + \sum\nolimits_{j \in \mathcal{L}_l} \tr( \qf_l \qf_l^{\rm{H}} \qW_j) +  \delta_s \tr(\qf_l \qf_l^{\rm{H}} \qS) \nonumber\\
    &+ \sum\nolimits_{k \in \mathcal{K}} \alpha_k  \bigg( \tr(\qh_{l,k} \qh_{l,k}^{\rm{H}} \qW_c )  + \sum\nolimits_{j \in \mathcal{L}} \tr(\qh_{l,k} \qh_{l,k}^{\rm{H}} \qW_j)  \nonumber\\
    &\hspace{30mm}+ \tr(\qh_{l,k} \qh_{l,k}^{\rm{H}} \qS) \bigg) + \sigma^2 .
\end{align}
\end{subequations}
The SINR in \eqref{P1_pvt_rate} is now bounded such that $0 \leq \mu_l \leq A_l/B_l$, where $\mu_l$ is an auxiliary variable. Thereby introducing another auxiliary variable, $\phi_l$, where  $A_l \geq \mu_l \phi_l$ and $B_l \leq \phi_l$. However, since the product $\mu_l \phi_l$ is non-convex,  Taylor-series linearization is used to convert it to a convex function as 
\begin{eqnarray}
    \mu_l \phi_l = \mu_l^{(t)} \phi_l^{(t)} + \phi_l^{(t)}(\mu_l - \mu_l^{(t)}) + \mu_l^{(t)} (\phi_l-\phi_l^{(t)}) \triangleq v_l,\!
\end{eqnarray}
where $\{\mu_l,\phi_l\}$ and $\{\mu_l^{(t)},\phi_l^{(t)}\}$ are the current and the previous iteration values of $\{\mu_l,\phi_l\}$, respectively. The constraint \eqref{P1_pvt_rate} is thus equivalent to 
\begin{subnumcases} {}
C_l + \log_2(1+ \mu_l) \geq \mathcal{R}_{p, l}^{\rm{th}}, & $\forall l$,\label{Pw_pvt_rate_c1}\\
A_l \geq v_l, & $\forall l$, \label{Pw_pvt_rate_c2}\\
B_l \leq \phi_l, & $\forall l$. \label{Pw_pvt_rate_c3}
\end{subnumcases}
For constraint \eqref{P1_ComRate_decod}, each rate value $\mathcal{R}_{c}$ is replaced with $\sum_{l \in \mathcal{L}} C_l \leq \mathcal{R}_{c,l}$ for $l\in\mathcal{L}$. Since $\mathcal{R}_{c,l}$ is not a convex function of the optimization variables, the SCA method is employed to linearize the constraint as 
\begin{eqnarray}\label{eqn_ComRate_decod_c1}
    \sum\nolimits_{l \in \mathcal{L}} C_l \leq \digamma_l - \tilde{\digamma}_l,
\end{eqnarray}
where $\digamma_l$ and $\tilde{\digamma}_l$ are given in \eqref{eqn_digammal} and \eqref{eqn_digammal_tild}, respectively, 
\begin{figure*}[!t]
\begin{eqnarray}
    \digamma_l &\triangleq& \log_2  \bigg( \tr(\qf_l \qf_l^{\rm{H}} \qW_c) + \sum_{j \in \mathcal{L}} \tr(\qf_l \qf_l^{\rm{H}} \qW_j ) + \delta_s \tr( \qf_l \qf_l^{\rm{H}} \qS)  \nonumber\\
    && + \sum_{k \in \mathcal{K}} \alpha_k   \bigg( \tr( \qh_{l,k} \qh_{l,k}^{\rm{H}}  \qW_c ) + \sum_{j \in \mathcal{L}} \tr(\qh_{l,k}  \qh_{l,k}^{\rm{H}} \qW_j ) + \tr( \qh_{l,k} \qh_{l,k}^{\rm{H}} \qS)   \bigg)+ \sigma^2  \bigg) \label{eqn_digammal}\\
    \tilde{\digamma}_l &\triangleq&  \log_2(\Lambda_l) + \frac{\sum_{k\in \mathcal{K}} \alpha_k \tr(\qh_{l,k} \qh_{l,k}^{\rm{H}}) }{\ln(2) \Lambda_l} \tr\left(\qW_c - \qW_c^{(t)}\right) + \sum_{j\in\mathcal{L}} \frac{ \tr(\qf_{l} \qf_{l}^{\rm{H}}) + \sum_{k\in \mathcal{K}} \alpha_k \tr(\qh_{l,k} \qh_{l,k}^{\rm{H}}) }{\ln(2) \Lambda_l} \tr\left(\qW_j - \qW_j^{(t)}\right) \nonumber \\
    && \frac{ \delta_s \tr(\qf_{l} \qf_{l}^{\rm{H}}) + \sum_{k\in \mathcal{K}} \alpha_k \tr(\qh_{l,k} \qh_{l,k}^{\rm{H}}) }{\ln(2) \Lambda_l} \tr\left(\qS - \qS^{(t)}\right) \label{eqn_digammal_tild}
\end{eqnarray}



\vspace{-0mm}

\end{figure*}
and $\Lambda_l$ is defined as
\begin{eqnarray} \label{eqn_lambda}
    \Lambda_l &\triangleq& \sum_{j \in \mathcal{L}} \tr \left(\qf_l \qf_l^{\rm{H}} \qW_j^{(t)} \right) + \delta_s \tr\left( \qf_l \qf_l^{\rm{H}} \qS^{(t)}\right) \nonumber \\
    &&\!\!\!\!\!\!\!\!\!\!\!\!\! + \sum_{k \in \mathcal{K}} \alpha_k   \bigg( \tr\left( \qh_{l,k} \qh_{l,k}^{\rm{H}}  \qW_c^{(t)} \right) + \sum_{j \in \mathcal{L}} \tr\left(\qh_{l,k}  \qh_{l,k}^{\rm{H}} \qW_j^{(t)} \right)  \nonumber \\
    &&\!\!\!\!\!\!\!\!\!\!\!\!\!   + \tr\left(\qh_{l,k} \qh_{l,k}^{\rm{H}} \qS^{(t)}\right)   \bigg)+ \sigma^2. 
\end{eqnarray}
In \eqref{eqn_lambda}, $(\cdot)^{(t)}$ denotes the previous iteration values of respective variables. Finally, $\mathbf{P}_{\mathrm{w}}$ can be reformulated to a relaxed SDP problem as given in \eqref{Pw1_prob}, which can be tackled using the CVX tool \cite{boyd2004convex, grant2014cvx}.  
\begin{figure*}
\begin{small}
\begin{subequations}\label{Pw1_prob}
    \begin{eqnarray}
       \!\!\!\!\!\!\!\!\! \mathbf{P}_{\mathrm{w}1}: 
        && \!\!\!\!\!\!\!\!
         \underset {\mathbf{W}_c, \mathbf{W}_l,\mathbf{S}, C_l }{\rm{min}}  \tr{(\mathbf{W}_c)} + \sum_{l \in \mathcal{L}} \tr{(\mathbf{W}_l)} +  \tr{(\mathbf{S})}, \label{Pw1_obj} \\
        \text{s.t.}
         && \!\!\!\!\!\!\!\! \Upsilon_{k}^{\rm{th}} \qu_k^{\rm{H}} \left(\sum_{i \in \mathcal{K}_k} \alpha_i \left(  \tr( \qG_i^{\rm{H}} \qG_i \qW_c)   + \sum_{j \in \mathcal{L}} \tr(\qG_i^{\rm{H}} \qG_i \qW_j)  + \tr(\qG_i^{\rm{H}} \qG_i \qS) \right) + \sigma^2 \mathbf{I}_N   \right. \nonumber \\     
         && \!\!\!\!\!\!\!\! \left. +\beta \left( \tr( \qG_{\rm{SI}}^{\rm{H}} \qG_{\rm{SI}} \qW_c) + \sum_{j \in \mathcal{L}} \tr(\qG_{\rm{SI}}^{\rm{H}} \qG_{\rm{SI}} \qW_j)  + \tr(\qG_{\rm{SI}}^{\rm{H}} \qG_{\rm{SI}} \qS) \right)\right)\qu_k \nonumber \\
         && \!\!\!\!\!\!\!\! - \alpha_k \qu_k^{\rm{H}} \left(  \tr( \qG_k^{\rm{H}} \qG_k \qW_c)   + \sum_{j \in \mathcal{L}} \tr(\qG_k^{\rm{H}} \qG_k \qW_j)  + \tr(\qG_k^{\rm{H}} \qG_k \qS) \right) \qu_k \leq 0, ~ \forall k,  \\
         &&\!\!\!\!\!\!\!\! \Gamma_{t,k}^{\rm{th}} \left( \delta_c \tr(\qf_0 \qf_0^{\rm{H}} \mathbf{W}_c) + \delta_p \sum_{j \in \mathcal{L}}\tr(\qf_0 \qf_0^{\rm{H}}  \mathbf{W}_j) + \delta_s \tr(\qf_0 \qf_0^{\rm{H}} \qS) + \sigma^2 + 
         \sum_{i \in \mathcal{K}_k} \alpha_i   \left( \tr(\qh_i \qh_i^{\rm{H}} \mathbf{W}_c) + \sum_{j \in \mathcal{L}}\tr(\qh_i \qh_i^{\rm{H}}  \mathbf{W}_j) \right. \right. \nonumber \\
         &&\!\!\!\!\!\!\!\! \left.\left.  + \tr(\qh_i \qh_i^{\rm{H}} \mathbf{S}) \right)   \right) - \alpha_k \left( \tr(\qh_k \qh_k^{\rm{H}} \mathbf{W}_c) + \sum_{j \in \mathcal{L}}\tr(\qh_k \qh_k^{\rm{H}}  \mathbf{W}_j) + \tr(\qh_k \qh_k^{\rm{H}} \mathbf{S}) \right) \leq 0, ~ \forall k,  \\
        &&\!\!\!\!\!\!\!\! P_{\rm{th}}-(1-\alpha_k) \bigg(\tr(\qg_{f,k}\qg^{\rm{H}}_{f,k}\qW_c) + \sum_{j\in \mathcal{L}} \tr(\qg_{f,k}\qg^{\rm{H}}_{f,k}\qW_j)+\tr(\qg^{\rm{H}}_{f,k}\qg_{f,k}\qS)  \bigg)\leq 0, ~ \forall k, \\
        &&\!\!\!\!\!\!\!\! \{\mathbf{W}_c, \{\mathbf{W}_l\}_{l \in \mathcal{L}},  \mathbf{S}  \}\succeq 0, \\
        &&\!\!\!\!\!\!\!\! \eqref{P1_cl},~ \eqref{Pw_pvt_rate_c1}-\eqref{Pw_pvt_rate_c3},~ \eqref{eqn_ComRate_decod_c1}.
    \end{eqnarray}
\end{subequations}
\end{small}

\vspace{-0mm}

\hrulefill

\vspace{-0mm}

\end{figure*}

If the solutions to the relaxed problem $\mathbf{P}_{\mathrm{w}1}$ are of rank one,  the optimal transmit beamformers are obtained by eigenvalue decomposition \cite{Qingqing2019}. Let the eigenvalue decomposition of $\qW$ (i.e., $\qW \in \{\qW_c^*, \{\qW_l^*\}_{l\in \mathcal{L}}\}$) to be $\qW = \qU \boldsymbol{\Sigma} \qU^{\rm{H}}$, where $\qU$ is a unitary matrix and $\boldsymbol{\Sigma}  = \text{diag}(\lambda_1, \dots, \lambda_{M})$ is a diagonal matrix. If $\qW^*$ is rank one, the optimal transmit beamformer, $\qw^*$, is the eigenvector for the maximum eigenvalue. If not,  the Gaussian randomization accounts for the relaxed rank-one constraint \cite{Qingqing2019}. Specifically, a  solution for $\mathbf{P}_{\mathrm{w}}$ is generated as $\bar{\qW} = \qU \boldsymbol{\Sigma} ^{1/2} \qr$, with $\qr \in \mathcal{CN}(0, \qI_{M})$. Such solutions are tested for  $10^5$ times,  and the best one is selected.  These numerous random realizations of  $\qr$ with the SDR technique ensure an $\frac{\pi}{4}$-approximation to the optimal value of $\mathbf{P}_{\mathrm{w}}$ \cite{so2007approximating, Qingqing2019}.

\subsection{Sub-Problem 3: Optimizing over $\alpha_k$}\label{sec_alpha_opt}
This final sub-problem requires  optimizing $\{\alpha\}_{k\in \mathcal{K}}$ for given $\{\mathbf{w}_c, \{\mathbf{w}_l\}_{l\in \mathcal{L}}, \mathbf{S}, \{\qu_k \}_{k\in \mathcal{K}}, \{C_l\}_{l\in \mathcal{L}}\}$. Note that objective  (18a) is not a function of the reflection coefficients of the tags. Thus, Sub-Problem 3 reduces to a constraint-satisfaction problem of $\mathbf{P}_{2}$. That is, any $\{\alpha_k\}_{k\in \mathcal{K}}$ that satisfies the constraints \eqref{P2_sens_rate}--\eqref{P2_18e}
is considered to be an optimal solution. However, to effectively handle Sub-Problem 3, a slack-optimization-based problem is proposed \cite{Shayan2021}. While satisfying the other constraints,  two new slack variables are introduced to optimize further the tags' SINR and EH margins, i.e., $t_1$ and $t_2$, respectively \cite{Shayan2021}. The proposed problem is thus given in \eqref{Palpha_prob}, 
\begin{figure*}
\begin{subequations}\label{Palpha_prob}
    \begin{eqnarray}
         \mathbf{P}_{\alpha}:&& \!\!\!\!
         \underset {\alpha_k, t_1, t_2}{\rm{min}}  \lambda_1 t_1 +\lambda_2 t_2 , \label{P1_obj_alpha} \\
       \!\!\! \text{s.t.} && \!\!\!\! \alpha_k \qu_k^{\rm{H}} \qG_k \qR_x \qG^{\rm{H}}_k \qu_k \geq \Upsilon_k^{\thr} {\qu^{\rm{H}}_k\left(\sum_{i\in \mathcal{K}_k} \alpha_i \qG_i \qR_x \qG_i^{\rm{H}} + \beta \qG_{\rm{SI}} \qR_x \qG_{\rm{SI}}^{\rm{H}} + \sigma^2 \qI_N \right)\qu_k}, ~ \forall k, \\
       && \!\!\!\! \alpha_k \left(\vert \qh_k^{\rm{H}} \mathbf{w}_c \vert^2 + \sum_{j \in \mathcal{L}} \vert \qh_{k}^{\rm{H}}  \mathbf{w}_i \vert^2 + \qh_k^{\rm{H}} \qS \qh_k  \right) \geq \Gamma_k^{\rm{th}} \left( \delta_c \vert \qf_0^{\rm{H}} \qw_c \vert^2 + \delta_p \sum_{j\in \mathcal{L}} \vert \qf_0^{\rm{H}} \qw_j \vert^2 + \delta_c \qf_0^{\rm{H}} \qS \qf_0 \right. \nonumber \\
       && \!\!\!\! \left. + \sum_{i \in \mathcal{K}_k} {\alpha_i} \left( \vert \mathbf{h}_{i}^{\rm{H}} \mathbf{w}_c\vert^2 + \sum_{j \in \mathcal{L}}\vert \mathbf{h}_{i}^{\rm{H}}  \mathbf{w}_j \vert^2 + \mathbf{h}_{i}^{\rm{H}} \mathbf{S} \mathbf{h}_{i}  \right) + \sigma^2 \right) + t_1, ~ \forall k, \\
       && \!\!\!\! \vert \mathbf{f}_l^{\rm{H}} \mathbf{w}_c \vert^2 \geq  \tilde{\Gamma}_{c,l}^{\rm{th}}  \left( \sum_{j \in \mathcal{L}} \vert \mathbf{f}_l^{\rm{H}} \mathbf{w}_j \vert^2 + \delta_s \vert \mathbf{f}_l^{\rm{H}} \mathbf{s} \vert^2 +  \sum_{k \in \mathcal{K}}  \alpha_k \left(\vert \mathbf{h}_{l,k}^{\rm{H}}  \mathbf{w}_c \vert^2  + \sum_{j \in \mathcal{L}}\vert \mathbf{h}_{l,k}^{\rm{H}} \mathbf{w}_j \vert^2 + \vert \mathbf{h}_{l,k}^{\rm{H}}  \mathbf{s} \vert^2 \right)  + \sigma^2 \right), ~  \forall l, \quad  \label{Palpha_user_Crate}\\
       && \!\!\!\! \vert \mathbf{f}_l^{\rm{H}} \mathbf{w}_l \vert^2\! \geq  \!\tilde{\Gamma}_{p,l}^{\rm{th}} \!\! \left(\! \delta_c \vert \mathbf{f}_l^{\rm{H}} \mathbf{w}_c \vert^2 \!+\!\! \sum_{j \in \mathcal{L}_l} \vert \mathbf{f}_l^{\rm{H}} \mathbf{w}_j \vert^2 \!+\!  \delta_s \vert \mathbf{f}_l^{\rm{H}} \mathbf{s} \vert^2  \! + \!\!\sum_{k \in \mathcal{K}}\! \!\alpha_k \!\! \left(\! \vert \mathbf{h}_{l,k}^{\rm{H}}  \mathbf{w}_c \vert^2  \!+\! \sum_{j \in \mathcal{L}}\vert \mathbf{h}_{l,k}^{\rm{H}} \mathbf{w}_j \vert^2 \!+\! \vert \mathbf{h}_{l,k}^{\rm{H}}  \mathbf{s} \vert^2 \!\!\right)+\!\! \sigma^2 \! \right), ~\forall l, \\
        && \!\!\!\! (1-\alpha_k)p_{k}^{\rm{in}}  \geq  {\Phi^{-1}(p_b)} + t_2, ~ \forall k, \\
        && \!\!\!  0 <\alpha_k < 1, ~ \forall k,
    \end{eqnarray}
\end{subequations}

\vspace{-0mm}

\hrulefill

\vspace{-0mm}

\end{figure*}
where $\tilde{\Gamma}_{c,l}^{\rm{th}} \triangleq 2^{\sum_{l\in \mathcal{L}} C_l} - 1$ and $\tilde{\Gamma}_{p,l}^{\rm{th}} \triangleq 2^{\mathcal{R}_{p,l}^{\rm{th}} - C_l} -1$. In \eqref{P1_obj_alpha}, $\lambda_1$ and $\lambda_2$ are  positive constants. $\mathbf{P}_{\alpha}$ is convex, and thus can be efficiently solved by solvers such as CVX \cite{grant2014cvx}. Introducing slack variables in $\mathbf{P}_{\alpha}$ converts strict constraints into adjustable ones with a definable margin. This facilitates the convergence process by setting a more tangible minimization goal and aligns well with the convergence strategies of iterative solvers like CVX due to the explicit objective guiding the solution path \cite{Shayan2021, Qingqing2019}.

Algorithm \ref{alg_AO} presents the overall steps to solve  problem $\mathbf{P}_1$ \eqref{P1_prob}. It begins by setting $\{\{\alpha_k\}_{k\in \mathcal{K}}, \mathbf{w}_c, \mathbf{w}_l, \mathbf{S}, C_l\}$ to random feasible values that satisfy the constraints in $\mathbf{P}_1$, and then refines the received/transmit beamforming, reflection coefficients, and common rate iteratively until the normalized reduction in the total transmit power is less than $\epsilon = 10^{-3}$.

Variable initialization assumes a crucial role in determining algorithm convergence. Closer initialization to optimal values typically accelerates convergence. However, complex optimization problems with numerous constraints pose challenges to finding a near-optimal starting guess, potentially escalating overall computing complexity \cite{boyd2004convex}. Researchers often resort to randomly generating feasible initial values \cite{Qingqing2019, Shen2018}, a strategy also embraced in this study. Nonetheless, simulation results show that our overall and individual algorithms achieve rapid convergence regardless of the random initial values.

\begin{algorithm}[!t]
\caption{Overall Algorithm}
\begin{algorithmic}[1]
\label{alg_AO}
\STATE \textbf{Input}: Set the iteration counter $t = 0$, the convergence tolerance $\epsilon > 0$, initial feasible solution $\{\{\alpha_k\}_{k\in \mathcal{K}}, \mathbf{w}_c, \mathbf{w}_l, \mathbf{S}, C_l\}$. Initialize the objective function value $F^{(0)} = 0$.  
\WHILE{ $ \frac{F^{(t+1)} - F^{(t)}}{F^{(t+1)}} \geq \epsilon$}
\STATE Solve \eqref{eqn_optimal_u} to derive optimal received beamformer, $\qu_k^{(t+1)}$.
\STATE Solve $\mathbf{P}_{\mathrm{w}1}$ \eqref{Pw1_prob} to obtain suboptimal transmit beamformers, $\{\mathbf{w}^{(t+1)}_c, \mathbf{w}^{(t+1)}_l, \mathbf{S}^{(t+1)}, C^{(t+1)}_l\}$ by applying Gaussian randomization to recover the rank-one solution.
\STATE Solve $\mathbf{P}_{\alpha}$ \eqref{Palpha_prob} to obtain the suboptimal power reflection coefficients, $\alpha_k^{(t+1)}$. 
\STATE Calculate the objective function value $F^{(t+1)}$.
\STATE Set $t\leftarrow t+1$;
\ENDWHILE
\STATE \textbf{Output}: Optimal solutions $\mathcal{A}^*$.
\end{algorithmic}
\end{algorithm}

\begin{rem}
This study utilizes the AO method, and each associated subproblem yields a local solution. The convergence principle of the AO method is well-documented \cite{bezdek2003convergence}. Simply put, if the individual sub-problems reach convergence, then the entire optimization process will converge as well \cite{bezdek2003convergence}. This study employs the SDR and slack-optimization techniques to determine $\{\mathbf{w}_c, \mathbf{w}_l, \mathbf{S}, C_l\}$ and $\{\alpha_k\}_{k\in \mathcal{K}}$, respectively. Meanwhile, $\{\qu_k\}_{k\in \mathcal{K}}$ is directly solved via the Rayleigh ratio quotient method. Both the SDR and slack-optimization are established methods with guaranteed convergence \cite{Qingqing2019}, reinforcing the convergence of our AO approach. Our simulation outcomes further validate this assertion (Fig.~\ref{fig_convergence}).
\end{rem}

\begin{theorem}
{\textbf{Algorithm Convergence:}
Algorithm~{\ref{alg_AO}} iterations yield a non-increasing sequence of objective values with guaranteed convergence.}
\end{theorem}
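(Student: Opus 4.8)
The plan is to establish convergence of Algorithm~\ref{alg_AO} in two stages: first, show that the sequence of objective values $\{F^{(t)}\}$ is non-increasing; second, show that this monotone sequence is bounded below, so it converges. For the monotonicity, I would examine the effect of each of the three sub-problem updates within one iteration and argue that none of them can increase the objective $\Vert \mathbf{w}_c\Vert^2 + \sum_{l\in\mathcal{L}}\Vert \mathbf{w}_l\Vert^2 + \tr(\mathbf{S})$, while all of them preserve feasibility with respect to the constraints of $\mathbf{P}_2$.

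First I would treat the $\mathbf{u}_k$-update (Sub-Problem~1). As noted in Section~\ref{Sec_receiver_combine}, the objective \eqref{P2_obj} does not depend on $\{\mathbf{u}_k\}_{k\in\mathcal{K}}$, so this step leaves $F$ unchanged; what must be verified is that the closed-form MMSE filter \eqref{eqn_optimal_u} maximizes the sensing SINR $\Upsilon_k$ via the generalized Rayleigh quotient in $\mathbf{P}_{\mathrm{u}1}$, hence keeps constraint \eqref{P2_sens_rate} satisfied (indeed it makes the sensing margin as large as possible given the other variables), and trivially satisfies \eqref{P1_detector}. Next, the $\{\mathbf{w}_c,\mathbf{w}_l,\mathbf{S},C_l\}$-update (Sub-Problem~2) solves $\mathbf{P}_{\mathrm{w}1}$, whose feasible set contains the previous iterate: the SCA/Taylor linearizations used in \eqref{Pw_pvt_rate_c1}--\eqref{Pw_pvt_rate_c3} and \eqref{eqn_ComRate_decod_c1} are first-order expansions about the previous point $(\mathbf{W}^{(t)},\mathbf{S}^{(t)},\ldots)$, so by concavity/convexity of the relevant functions the linearized constraints are tighter than the originals and are active-to-first-order at the expansion point, meaning the previous iterate remains feasible for $\mathbf{P}_{\mathrm{w}1}$. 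Since the minimization in \eqref{Pw1_obj} is over a set containing the previous iterate, the optimal value cannot exceed $F^{(t)}$; thus $F$ is non-increasing across this step. Finally, the $\{\alpha_k\}$-update (Sub-Problem~3) again does not change the objective \eqref{P2_obj}, because \eqref{P1_obj_alpha} involves only the slack variables $t_1,t_2$; one checks that the previous $\{\alpha_k^{(t)}\}$ together with a suitable choice of slacks is feasible for $\mathbf{P}_{\alpha}$, so the update returns a point that is still feasible for $\mathbf{P}_2$ (the slack objective only improves the SINR/EH margins) and leaves $F$ unchanged. Combining the three steps, $F^{(t+1)}\le F^{(t)}$ for all $t$.

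For boundedness below, I would simply note that $F^{(t)}=\Vert \mathbf{w}_c^{(t)}\Vert^2 + \sum_{l\in\mathcal{L}}\Vert \mathbf{w}_l^{(t)}\Vert^2 + \tr(\mathbf{S}^{(t)}) \ge 0$ since it is a sum of squared norms and the trace of a positive semidefinite matrix $\mathbf{S}^{(t)}\succeq 0$. A monotone non-increasing sequence that is bounded below converges by the monotone convergence theorem, which gives convergence of $\{F^{(t)}\}$; the stopping rule $\frac{F^{(t)}-F^{(t+1)}}{F^{(t+1)}}<\epsilon$ is therefore triggered in finitely many iterations. Invoking the standard AO convergence argument \cite{bezdek2003convergence}, together with the established convergence of SDR \cite{Qingqing2019} and of the SCA/slack-optimization inner procedures \cite{Razaviyayn2013}, yields convergence of the algorithm.

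The main obstacle I anticipate is the feasibility-preservation claim, i.e., showing that the previous iterate (or a slack-augmented version of it) is always contained in the feasible set of each reformulated sub-problem $\mathbf{P}_{\mathrm{w}1}$ and $\mathbf{P}_{\alpha}$. This is the crux: monotonicity of $F$ is immediate once each sub-problem's feasible set is known to contain the warm-start point, but because the constraints have been repeatedly relaxed, linearized (SCA around $(\cdot)^{(t)}$), and relaxed again (SDR dropping the rank-one constraint), one must carefully confirm that the first-order surrogates are global under-/over-estimators that touch the true constraint at the expansion point, and that the rank-relaxation plus Gaussian randomization step still returns a point satisfying the original constraints (or that the analysis is carried out at the level of the relaxed problem). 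Handling the non-increase rigorously across the Gaussian-randomization recovery — where the recovered rank-one $\mathbf{w}^*$ may not exactly attain the relaxed optimum — is the delicate point, and I would address it by working with the $\tfrac{\pi}{4}$-approximation guarantee already cited, or by treating the recovered solution as the effective iterate whose objective is still upper-bounded by $F^{(t)}$.
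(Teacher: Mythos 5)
Your proposal follows the same skeleton as the paper's proof in Appendix~A: a block-coordinate (AO) monotonicity argument showing $F^{(t+1)}\le F^{(t)}$ across the three sub-problem updates, combined with the observation that $F\ge 0$, so the monotone convergence theorem applies. In that sense the approach is the same and your argument is correct.

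Where you differ is in the level of care, and your version is actually the more defensible one. The paper's proof asserts that each of the three updates ``minimizes the value of the objective function,'' but as you correctly point out, the objective $\Vert\mathbf{w}_c\Vert^2+\sum_l\Vert\mathbf{w}_l\Vert^2+\tr(\mathbf{S})$ is independent of $\{\mathbf{u}_k\}$ and $\{\alpha_k\}$, so Sub-Problems~1 and~3 leave $F$ unchanged and the real content of those steps is feasibility preservation (which the paper never verifies). You also identify the two genuinely delicate points the paper is silent on: (i) that the previous iterate must lie in the feasible set of the SCA-linearized problem $\mathbf{P}_{\mathrm{w}1}$ for the warm-start inequality to hold, and (ii) that Gaussian randomization after SDR may return a rank-one point whose objective is not exactly the relaxed optimum, so strict monotonicity of the \emph{implemented} iterates is only guaranteed up to the $\tfrac{\pi}{4}$-approximation. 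Neither issue is addressed in the paper's proof (whose final chained inequality is, as printed, the tautology $F\le F$), so your more cautious treatment closes gaps rather than introducing them.
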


\begin{proof}
   {Please see  Appendix {\ref{Theorem_1}}.}
\end{proof}

\subsection{Complexity of Proposed Algorithm}
{The computational complexity of Algorithm {\ref{alg_AO}}  is analyzed.}

\subsubsection{Optimization over $\qu_k$}
{Here, the Rayleigh quotient method provides the optimal closed-form received beamformers. It requires $\mathcal{O}(N^3)$ complexity for computing the inverse of the matrix $\qQ$ while the MMSE filters for the $K$ tags {\eqref{eqn_optimal_u}} need complexity of $\mathcal{O}(KN^2)$. The total complexity for this problem is thus $\mathcal{O}(KN^2 + N^3)$.}

\subsubsection{Optimization over $\{\mathbf{w}_c, \mathbf{w}_l, \mathbf{S}, C_l\}$} 
{This SDR sub-problem is tackled using SDP, specifically the interior-point method. From {\cite[Th. 3.12]{polik2010interior}}, the order of computational complexity for a SDP problem with $m$ SDP constraints with an $n \times n$ positive semi-definite (PSD) matrix is given by $\mathcal{O}\left( \sqrt{n} \log\left({1}/{\epsilon}\right) (mn^3 + m^2n^2 + m^3) \right)$, where $\epsilon > 0$ is the solution accuracy. In our proposed problem, $\mathbf{P}_{\mathrm{w}}$ {\eqref{Pw_prob}}, $n = M$ and $m = 3(K + L) +2$, and the approximate computational complexity is $\mathcal{O}\left( (K+L)M^3 \sqrt{M} \log\left({1}/{\epsilon}\right)   \right)$.}

\subsubsection{Optimization over $\alpha_k$}
{The difference of convex method and the interior point method are employed {\cite{Shayan2021}}. The number of iterations for convergence is given by $\frac{\left({\log(C)}/{t^0\delta}\right)}{\log \epsilon}$, where $C$ is the overall number of constraints, $t^0$ signifies the initial approximation for the interior point method's accuracy, and $0 < \delta \ll 1$ is the stopping criterion {\cite{boyd2004convex}}.}

\subsubsection{Algorithm \ref{alg_AO}} 
{Finally, the proposed Algorithm {\ref{alg_AO}}'s computational complexity can be expressed asymptotically as $\mathcal{O} \!\left(\! I_o \! \left(\! (K \!+\! N) N^2 \!+\! (K\!+\!L)M^3 \sqrt{M} \log\!\left(\!\frac{1}{\epsilon}\!\right) \!+\!  \frac{\left({\log(C)}/{t^0\delta}\right)}{\log \epsilon} \!\right) \!\right)$ where $I_o$ is the overall number of iterations for the Algorithm  {\ref{alg_AO}} to converge. Despite its higher-order polynomial time complexity, the suggested approach exhibits noteworthy real-world performance for datasets up to a certain size, particularly when $N$ and $M$ are kept below a predefined threshold. For large datasets, adopting optimization measures such as parallel processing can dramatically enhance performance {\cite{leiserson2010parallel}}.}

\section{Simulation Results}
{Next, simulation results are presented to assess the performance of the proposed ISABC network. 
The 3GPP urban micro (UMi) model is chosen to depict the path-loss values {$\zeta_{a}, \zeta_b$}, operating at a frequency of ${f_c=\qty{3}{\GHz}}$ {\cite[Table B.1.2.1]{3GPP2010}}. Additionally, the AWGN variance, symbolized as {$\sigma^2$}, is modeled as $\sigma^2=10\log_{10}(N_0 B N_f)$ {\qty{}{\dB m}}. In this equation, $N_0$ is given by $\qty{-174}{\dB m/\Hz}$, while $B$ represents the bandwidth, and $N_f$ denotes the noise figure. Table~{\ref{table_notations}} lists the main simulation parameters unless mentioned otherwise. All simulations are evaluated for {\num{e3}} iterations.} 

{Considering smart home/city scenarios, the BS and the mobile reader are positioned at coordinates $\{0,0\}$ and $\{12,0\}$, respectively. Meanwhile, tags are sporadically placed within a circle whose center is $\{6,-4\}$ and which has a radius of {\qty{3}{\m}} {\cite{Galappaththige2023}}. Furthermore, the users are randomly scattered within a circle with a radius of {\qty{5}{\m}} and a center of $\{55,0\}$.}

\begin{table}[t]
    \centering
    \renewcommand{\arraystretch}{1.1}
    \setlength{\tabcolsep}{10pt} 
    \caption{{Simulation parameters.}}
    \begin{tabular}{|c |c |c |c |}
    \hline
    \textbf{Parameter} & \textbf{Value} 	& \textbf{Parameter} 	& \textbf{Value}   \\ \hline \hline
    $f_c$ 							& \qty{3}{\GHz}  	& $\mathcal{R}_{t,k}^{\rm{th}}$ 	& \qty{1}{bps/\Hz}  \\ \hline
    $B$  							& \qty{10}{\MHz}  	& $ \mathcal{R}_{p,k}^{\rm{th}}$  		& \qty{2}{bps/\Hz}  \\ \hline
    $N_f$ 							& \qty{10}{\dB}  	& $K_{\rm{SI}}$  		& \qty{3}{\dB}  \\ \hline
    $M=N$ 							& \num{8}   		& $\beta$ 								& \qty{-90}{\dB}   \\ \hline
    $L$ 							& \num{3}  			& $\{\delta_c, \delta_p, \delta_s \}$ 	& \qty{-10}{\dB}   \\ \hline
    $K$ 							& \num{2} 			& $p_b$  								& \qty{-20}{\dB m} \\ \hline
    $\mathcal{R}_{s,k}^{\rm{th}}$ 	& \qty{1}{bps/\Hz}	&  $\epsilon$										&  $10^{-3}$ \\ 
    \hline
    \end{tabular}
    \label{table_notations}
\end{table}

For comparative  evaluation purposes,  the following benchmark schemes are considered:

\subsubsection{NOMA-assisted ISABC} This benchmark uses NOMA to serve the primary users. For a fair comparison, \bc and sensing at the BS are also considered, i.e., $\mathbf{x} = \sum_{j \in \mathcal{L}} \mathbf{w}_{j} x_j + \qs$. Without loss of generality, the channel gains are sorted in descending order, i.e., $\Vert {\mathbf{f}}_1 \Vert^2 \ge \ldots \ge \Vert {\mathbf{f}}_L\Vert^2$. Thus, $U_l$  first decodes the data for users with high channel gains and applies SIC before decoding its data signal. The SINR of $U_l$ with NOMA is given in \eqref{SINR_NOMA},
\begin{figure*}
\begin{eqnarray}\label{SINR_NOMA}
    \gamma_{l} = \frac{\vert \mathbf{f}_l^{\rm{H}} \mathbf{w}_l \vert^2} {\delta_p \sum_{j \in \mathcal{L}_l'}\vert \mathbf{f}_l^{\rm{H}} \mathbf{w}_j \vert^2 + \sum_{j \in \mathcal{L}_l''}\vert \mathbf{f}_l^{\rm{H}} \mathbf{w}_j \vert^2 + \delta_s \vert \mathbf{f}_l^{\rm{H}} \mathbf{s} \vert^2 + \sum_{k \in \mathcal{K}} \alpha_k  \left( \sum_{j \in \mathcal{L}}\vert \mathbf{h}_{l,k}^{\rm{H}} \mathbf{w}_j \vert^2 + \vert \mathbf{h}_{l,k}^{\rm{H}}  \mathbf{s} \vert^2 \right)+ \sigma^2}
\end{eqnarray}

\vspace{-0mm}

\hrulefill 

\vspace{-0mm}

\end{figure*}
where $\mathcal{L}_l' \triangleq \{1,\ldots,l-1\}$ and $\mathcal{L}_l'' \triangleq \{l+1,\ldots,L\}$. Moreover, the tags' SINRs and sensing SINRs can be easily derived by removing the common beamforming in  \eqref{SINR_tag} and \eqref{eqn_sens_SINR_tag}, respectively. This benchmark facilitates the investigation of the impacts of RSMA compared to conventional NOMA.

\subsubsection{RSMA-assisted \bc}
This benchmark considers multiple RSMA-enabled primary users and multiple tags.
However, the BS has no sensing capabilities, i.e., $\mathbf{x} = \qw_c x_c + \sum_{j \in \mathcal{L}} \mathbf{w}_{j} x_j$. This benchmark helps to evaluate the cost of adding sensing.

\subsubsection{Conventional \bc}
This benchmark considers a conventional \abc system with $K$-tags and $L$-users. Specifically, the BS does not perform rate-splitting and sensing and uses SDMA to serve the users. Thus, the BS transmit signal becomes  $\mathbf{x} = \sum_{j \in \mathcal{L}} \mathbf{w}_{j} x_j$. These tags operate solely on  EH and send data to the reader by reflecting  $\mathbf x$.  This benchmark helps to evaluate the cost of incorporating sensing and rate-splitting. 

\subsubsection{Conventional ISABC}
This benchmark uses  SDMA to serve the primary users. The BS  senses environmental information from $K$ tags. Thus,  the BS transmit signal becomes  $\mathbf{x} = \sum_{j \in \mathcal{L}} \mathbf{w}_{j} x_j + \qs$. The benchmark helps to gauge the difference between  SDMA and RSMA.  

\subsubsection{Sensing-only scheme}
This benchmark scheme evaluates a system with $K$-targets/tags without primary and backscatter communications. The targets/tags use EH, allowing for a fair comparison. For example, targets may be wireless sensors/tags that do not send data but harvest energy to remain active.

These benchmark schemes are special cases of Algorithm \ref{alg_AO}, which can easily handle them with minor modifications. 

\subsection{Convergence Rate of Algorithm \ref{alg_AO}}

\begin{figure}[!t]\vspace{-0mm}	
\centering
\includegraphics[width=0.45\textwidth]{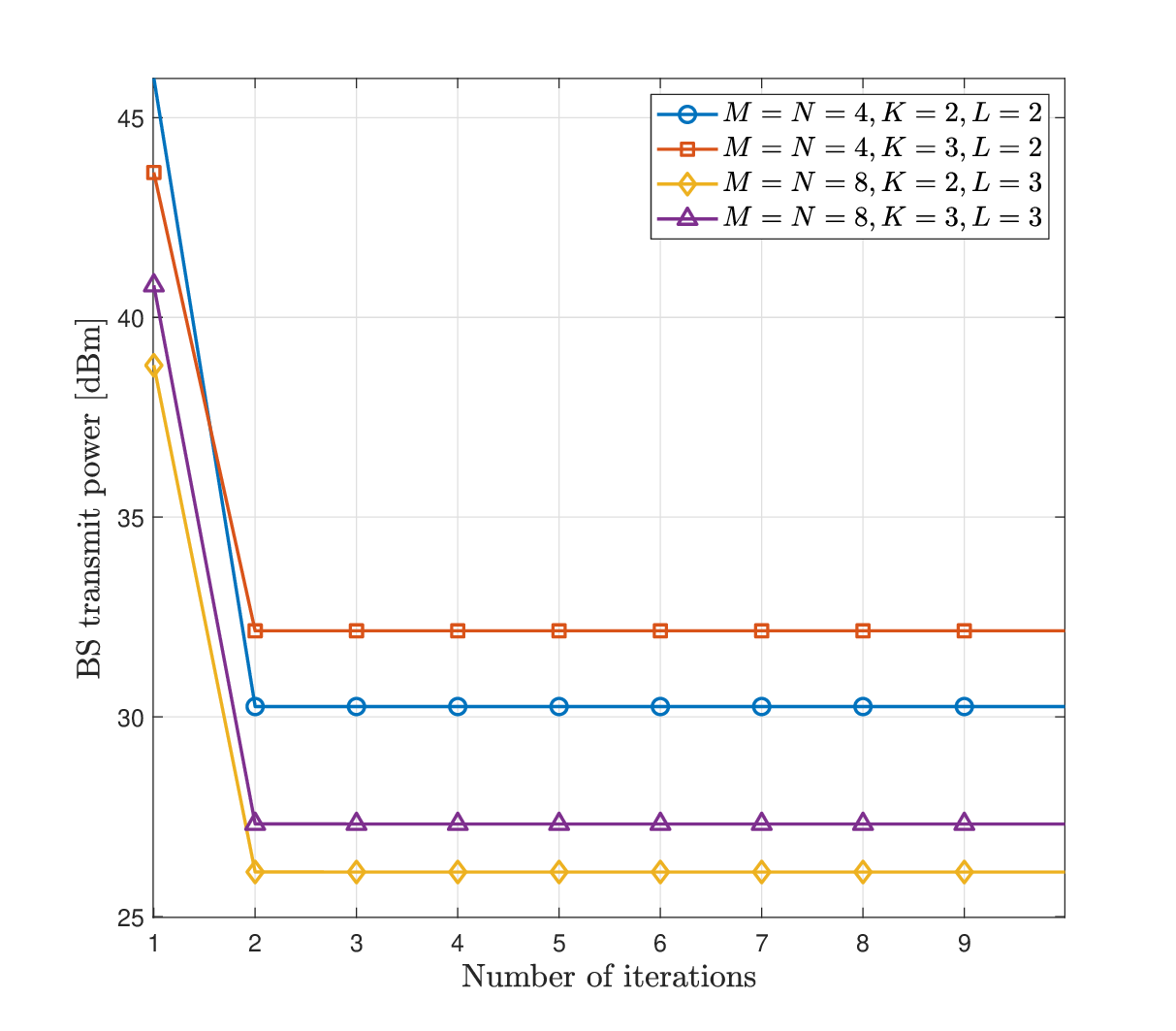}\vspace{-0mm}
    \caption{Convergence of the proposed Algorithm \ref{alg_AO}.}
	\label{fig_convergence} \vspace{-0mm}
\end{figure}

{Algorithm {\ref{alg_AO}}  outputs all variable set $\mathcal A$ in several iterations.   Thus, the saturation of the BS transmit power is a convergence test, and  Fig.~{\ref{fig_convergence}} thus plots it as a function of the number of iterations. The stopping criterion is that the normalized objective function increases less than $\epsilon = 10^{-3}$. The  BS transmit power decreases rapidly with each iteration and saturates after approximately three iterations, regardless of  $M$, $L$, or $K$. This implies quick convergence and validates the proposed algorithm's effectiveness. Extensive simulations reveal that Algorithm {\ref{alg_AO}} requires only three iterations to attain adequate performance with any system setup, hardly improving performance beyond three iterations.}

\subsection{Algorithm Execution Time}
\begin{figure}[!t]\vspace{-0mm}	
\centering
\includegraphics[width=0.45\textwidth]{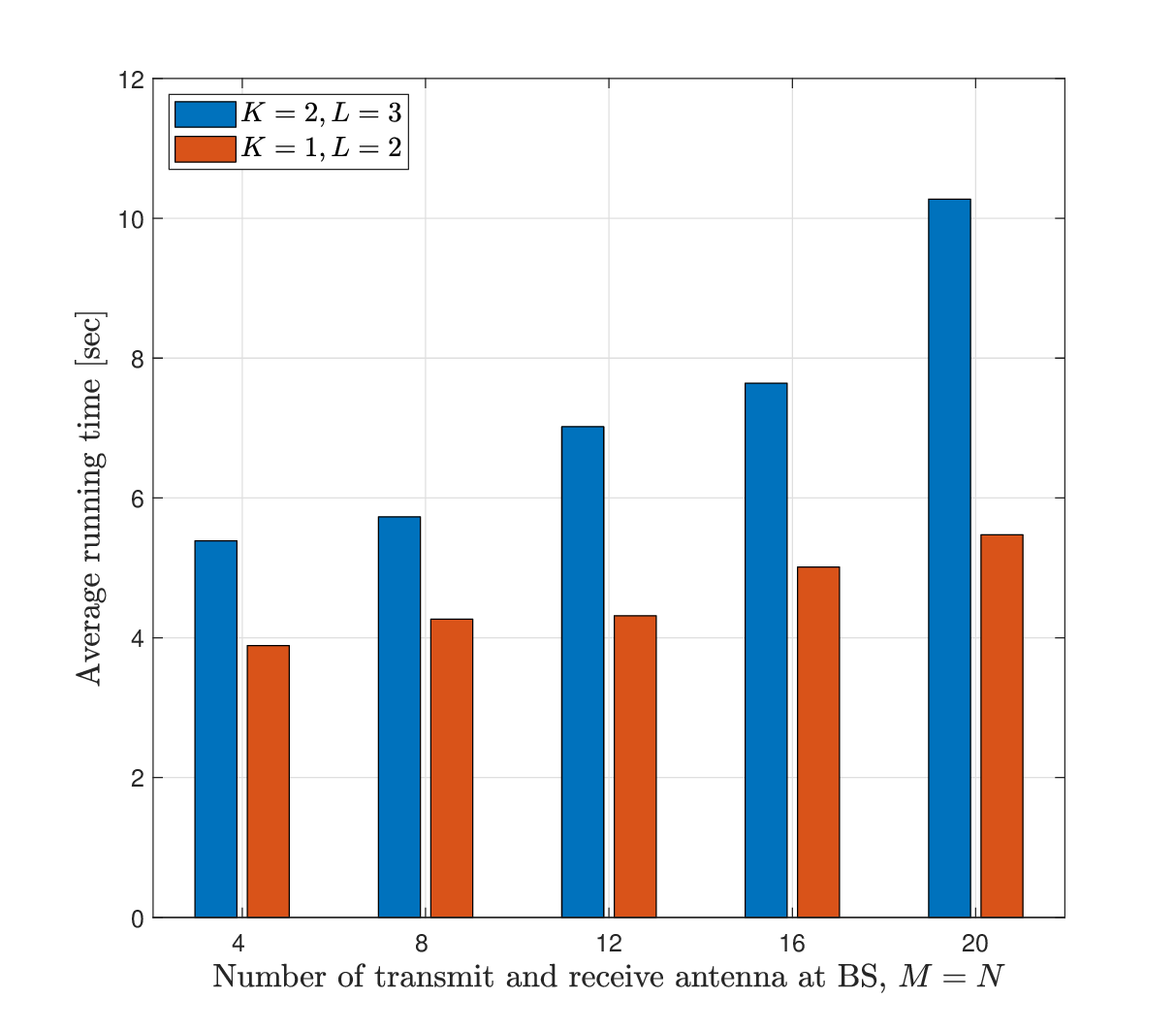}\vspace{-0mm}
    \caption{Average running time of the proposed Algorithm \ref{alg_AO} as a function of the number of BS transmit/receive antennas, $M=N$.}
	\label{fig_RunTime_M} \vspace{-0mm}
\end{figure}

{Fig.~{\ref{fig_RunTime_M}} shows the average runtime of Algorithm~{\ref{alg_AO}} as a function of the number of BS transmit/receive antennas ($M=N$), based on Matlab simulations using an Intel\textsuperscript{\textregistered} Core\textsuperscript{\texttrademark} i7 processor clocking at {\qty{2.8}{\GHz}}. The results indicate that the runtime increases proportionally with the number of BS transmit/receive antennas, highlighting the growing computational complexity as $M/N$ increases. Conversely, the run time is also proportional to the number of tags and/or users. This trend emphasizes the complex challenges of handling more BS antennas and highlights the significance of effective algorithms.}

{This study mainly focuses on application scenarios in low-mobility device or tag deployment contexts, such as smart homes and warehouses, which represent emerging backscatter networks as described in the 3GPP standard {\cite{ Huawei_ambient, Huawei}}. In such settings, channels have long coherence times due to the low mobility {\cite{Huawei_ambient, Huawei}}. Additionally, the algorithm's execution time depends largely on the BS's hardware and computational capacity, which typically includes dedicated hardware. As a result, the actual runtime is expected to be much shorter than depicted in Fig.~{\ref{fig_RunTime_M}}, remaining well within the coherence time.}

\subsection{Beampattern Gains}

\begin{figure}[!t]\vspace{-0mm}	
\centering
\includegraphics[width=0.45\textwidth]{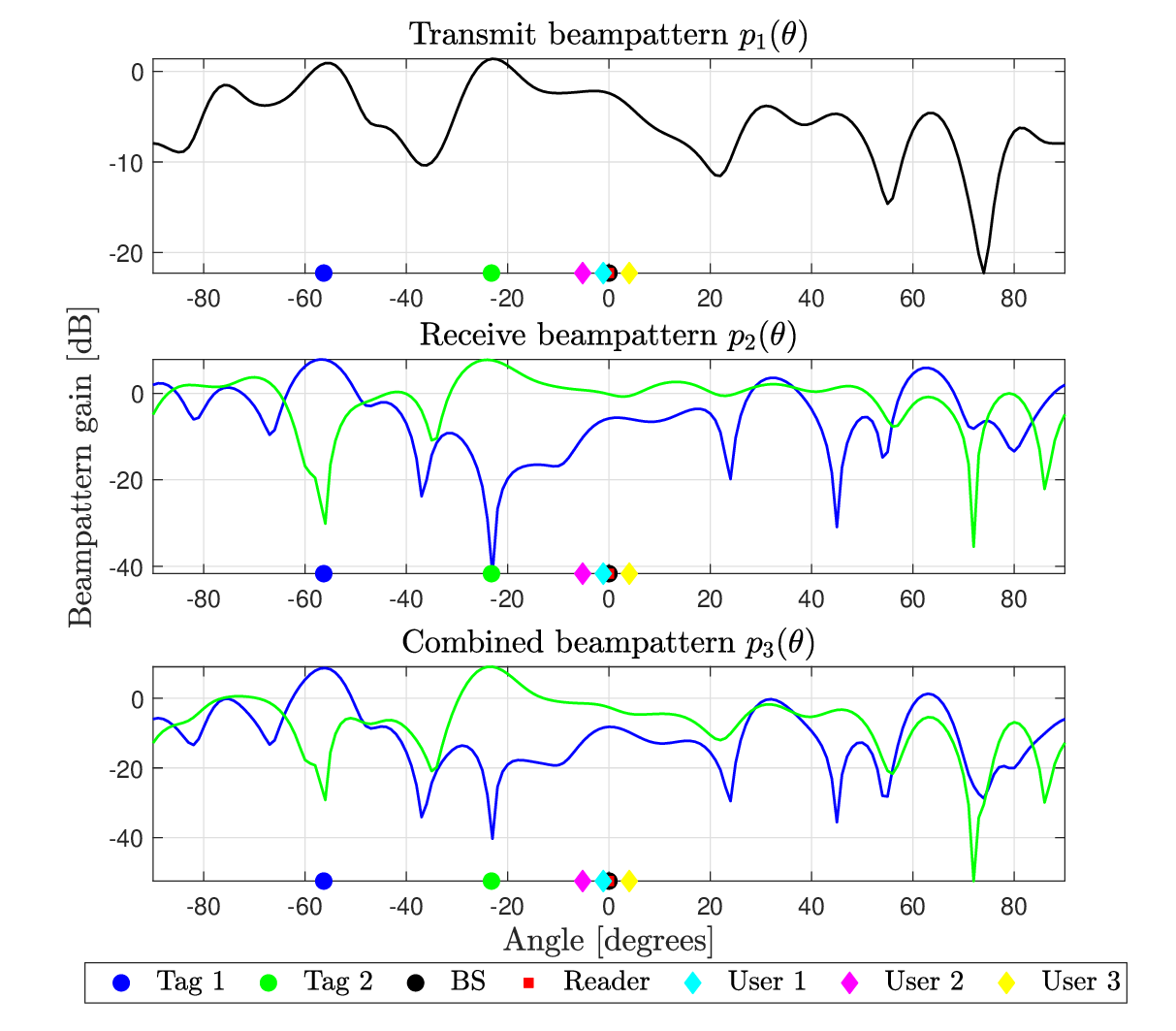}\vspace{-0mm}
    \caption{Beampattern of radar functionality of Algorithm \ref{alg_AO}.}
	\label{fig_BeamGain_proposed} \vspace{-0mm}
\end{figure}

\begin{figure}[!t]\vspace{-0mm}	
\centering
\includegraphics[width=0.45\textwidth]{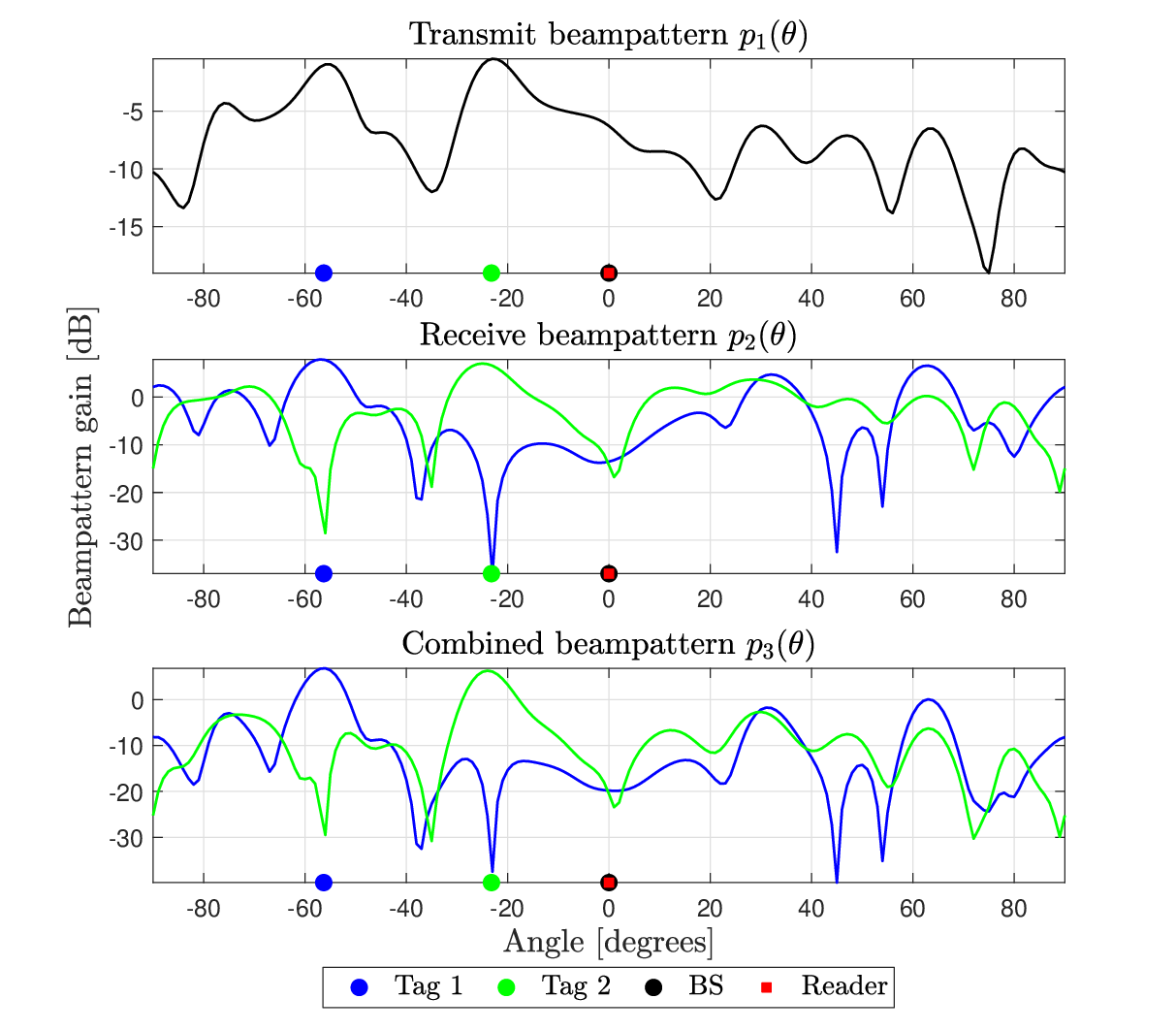}\vspace{-0mm}
    \caption{Beampattern of radar functionality of sensing-only scheme.}
	\label{fig_BeamGain_Sens_EH} \vspace{-0mm}
\end{figure}

The direct transmission and reception of beams in precise directions help to extract more information.  To this end, Algorithm \ref{alg_AO} functions as a linchpin, controlling the formation and direction of these beams. Beamforming refers to merging  signals from an array of antennas to produce a directed “beam” or “lobe.” This beam can be electronically steered while the antennas stay stationary. This steering feature improves signal quality, increases backscatter tag detection, and reduces possible interference significantly. 

The BS transmit signal, $\mathbf x,$ represents the outward-projected energy to illuminate the targets. The received beamformer is designed for clear reception, catching echoes or reflections off the backscatter tags. The  three crucial beampatterns are 
\begin{subequations}
\begin{eqnarray}
p_1(\theta) &=& \left| \mathbf{b}^{\rm{H}}(\theta_k) \qx^* \right|^2, \label{eq_p1} \\ 
p_2(\theta) &=& \left| (\qu_k^*)^{\rm{H}} \mathbf{b}(\theta_k) \right|^2, \label{eq_p2} \\
p_3(\theta) &=& \left| (\qu_k^*)^{\rm{H}} \mathbf{b}(\theta_k) \mathbf{b}^{\rm{H}}(\theta_k) \qx^* \right|^2. \label{eq_p3} 
\end{eqnarray}
\end{subequations}
Here, \eqref{eq_p1} indicates how transmitted energy disperses as a function of angle $\theta$, while \eqref{eq_p2} captures the sensitivity across multiple angles while receiving reflected energy. Moreover, \eqref{eq_p3} provides a composite representation incorporating the impacts of transmission and subsequent reflection processing.

Fig.~\ref{fig_BeamGain_proposed} and Fig.~\ref{fig_BeamGain_Sens_EH} illustrate the aforementioned three beampatterns generated using Algorithm \ref{alg_AO} and the sensing-only approach, respectively. The first sub-figure in Fig. \ref{fig_BeamGain_proposed} shows that Algorithm \ref{alg_AO} aims the transmit beams at tags and users. In contrast, the second sub-figure depicts the directions of interference from other tags for a specific tag with relatively deep nulls, separating the individual sensing information at the BS. The overall beampattern (third sub-figure) combines the transmit and receive beampatterns. Compared to the sensing-only scheme (Fig.~\ref{fig_BeamGain_Sens_EH}), Algorithm \ref{alg_AO} additionally provides users with communication capabilities while alleviating interference between targets/tags for sensing. 

Fig.~\ref{fig_BeamGain_proposed} and Fig.~\ref{fig_BeamGain_Sens_EH} help sense operations as they depict the radar/beamgain patterns. These figures offer a practical perspective, helping to better understand the intricacies and efficacy of beamforming algorithms.

\subsection{Effects of Number of BS Antennas}

\begin{figure}[!t]\vspace{-0mm}	
\centering
\includegraphics[width=0.45\textwidth]{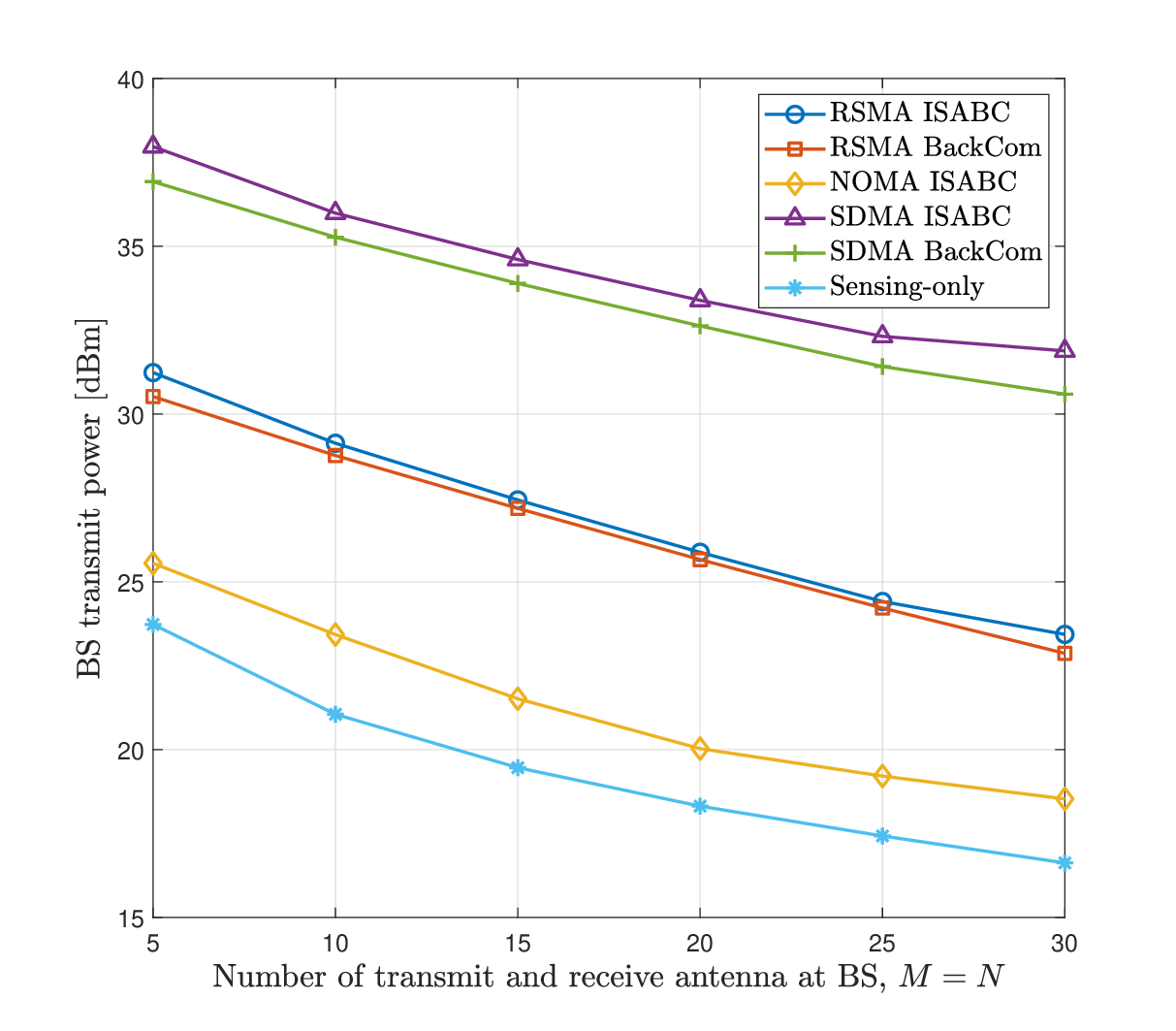}\vspace{-0mm}
    \caption{BS transmit power versus the number of BS transmit/receive antennas, $M=N$.} 
	\label{fig_TxPower_M} \vspace{-0mm}
\end{figure}

\begin{figure}[!t]\vspace{-0mm}	
\centering
\includegraphics[width=0.45\textwidth]{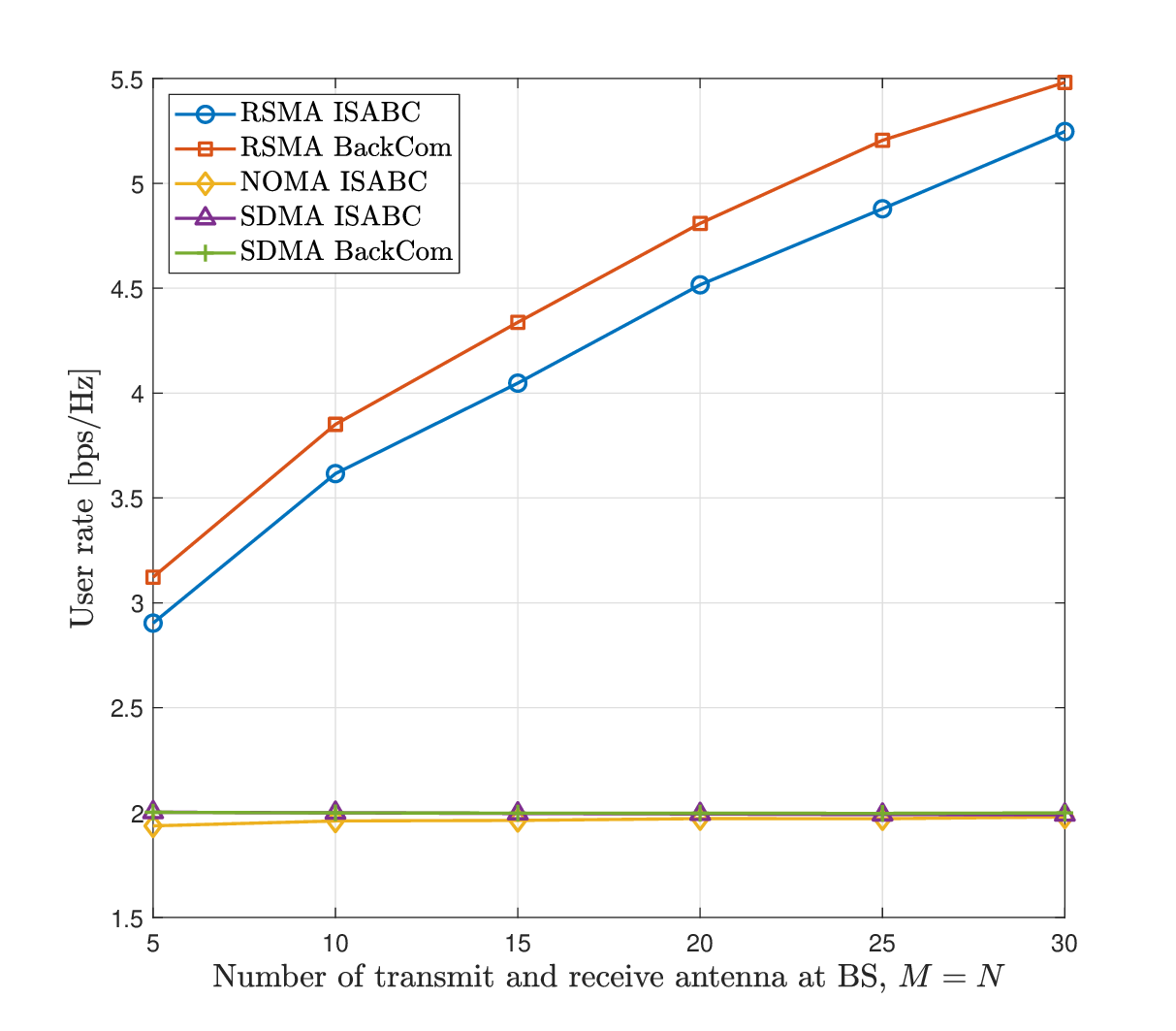}\vspace{-0mm}
    \caption{Average user rate versus the number of BS transmit/receive antennas, $M=N$.}
	\label{fig_UserRate_M} \vspace{-0mm}
\end{figure}

Fig.~\ref{fig_TxPower_M} and Fig.~\ref{fig_UserRate_M} show the influence of the number of BS antennas, $M=N$, on the BS transmit power and primary user rates, respectively. Fig.~\ref{fig_TxPower_M} shows that increasing $M$ has an inverse effect on the required power. Specifically, increasing  $M$ decreases the transmit power for all the schemes. This trend suggests the spatial diversity benefits for ISABC, namely  (1) increased communication rates and (2) significant power savings.  Additionally, NOMA and sensing-only schemes consume the lowest amount of transmit power. This is because NOMA has no common rate, and the sensing-only approach lacks communication. As a result, NOMA achieves a significantly lower user rate than RSMA (Fig~\ref{fig_UserRate_M}).
In contrast, conventional \bc and ISABC consume the most power resulting from multi-user multi-tag interference. Conversely, the proposed setup and RSMA-assisted \bc have moderate transmit power requirements. Importantly, the proposed IMBO algorithm enables sensing, a critical feature for IoT networks, with only a modest transmit power increase. 

Fig.~\ref{fig_UserRate_M} compares the average primary user rates of all the schemes, i.e., $\bar{\mathcal{R}}_{c} \triangleq 1/L \sum_{l\in \mathcal{L}}\mathcal{R}_{c,l}$ and  $\bar{\mathcal{R}}_{p} \triangleq 1/L \sum_{l\in \mathcal{L}}\mathcal{R}_{p,l}$. The figure shows the proposed scheme achieves the highest communication rate among all. The reason is that RSMA enables the optimization of the common rates, which increases them.   For example, just for a  $\qty{24}{\percent}$ transmit power increase ($M=N=\num{10}$), our scheme achieves a  $\qty{84.5}{\percent}$ rate gain over  NOMA-assisted ISABC.

\subsection{Transmit Power Versus Number of Tags}

\begin{figure}[!t]\vspace{-0mm}	
\centering
\includegraphics[width=0.45\textwidth]{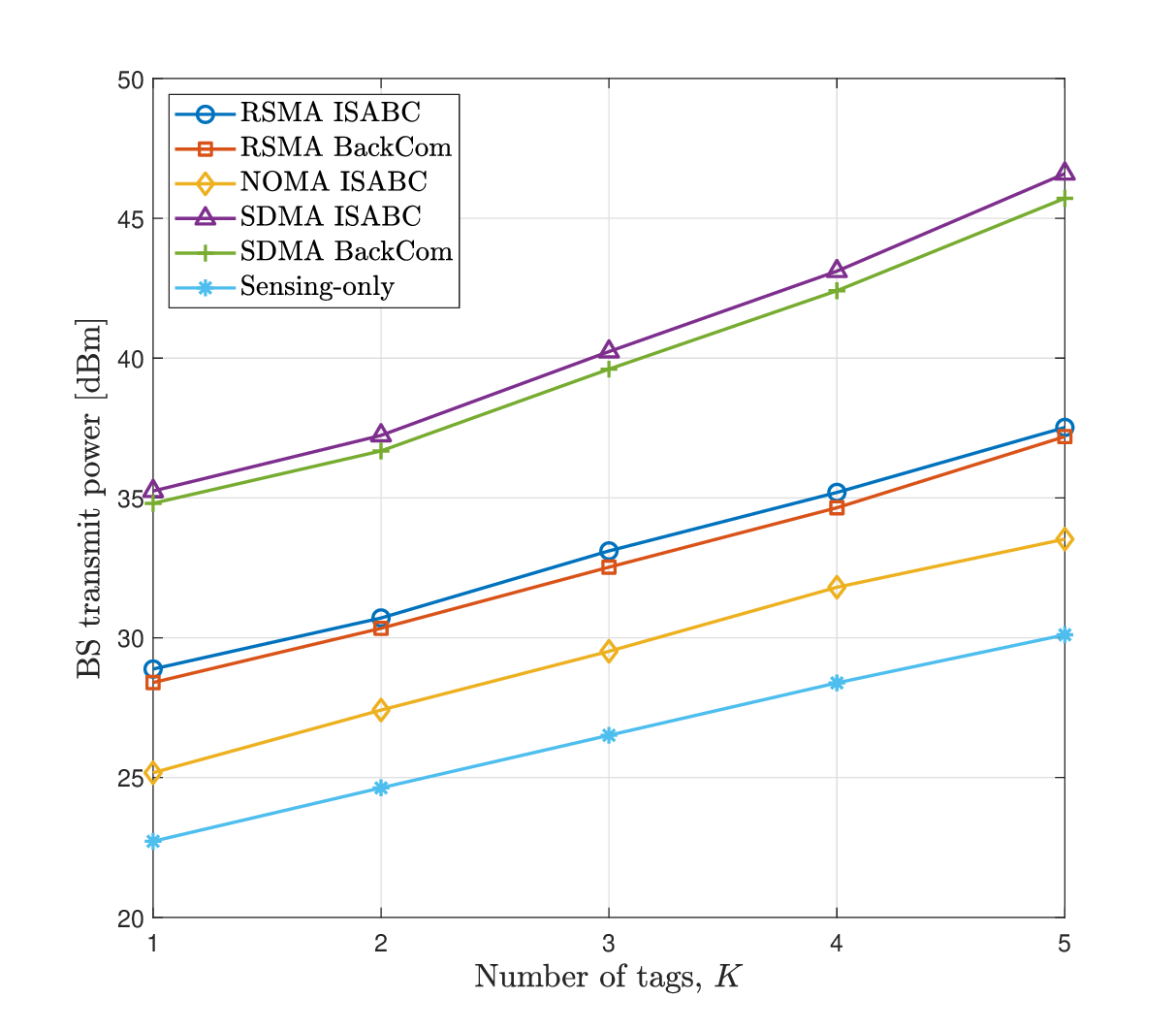}\vspace{-0mm}
    \caption{Transmit power versus the number of tags, $K$.}
	\label{fig_No_of_tag} \vspace{-0mm}
\end{figure}

Fig.~\ref{fig_No_of_tag} exhibits the intricate relationship between the number of tags and the BS transmit power requirements. From Fig.~\ref{fig_No_of_tag}, these two variables show an evident linear correlation across all schemes. Although the NOMA and sensing-only methods reduce the transmit power, they also have inferior communication performance (Fig.~\ref{fig_UserRate_M}). 

Conversely, both conventional \bc and conventional ISABC demand higher BS transmit power to satisfy communication (and/or sensing) rates and EH while mitigating multi-user multi-tag interference. In contrast, both the proposed and RSMA-assisted \bc utilize a moderate BS transmit power. Nonetheless,  compared to  RSMA-assisted \bc, the proposed ISABC approach requires a low power increase and promotes both communication and sensing by transforming passive tags into proactive participants. 

\subsection{Transmit Power Versus Number of Users}

\begin{figure}[!t]\vspace{-0mm}	
\centering
\includegraphics[width=0.45\textwidth]{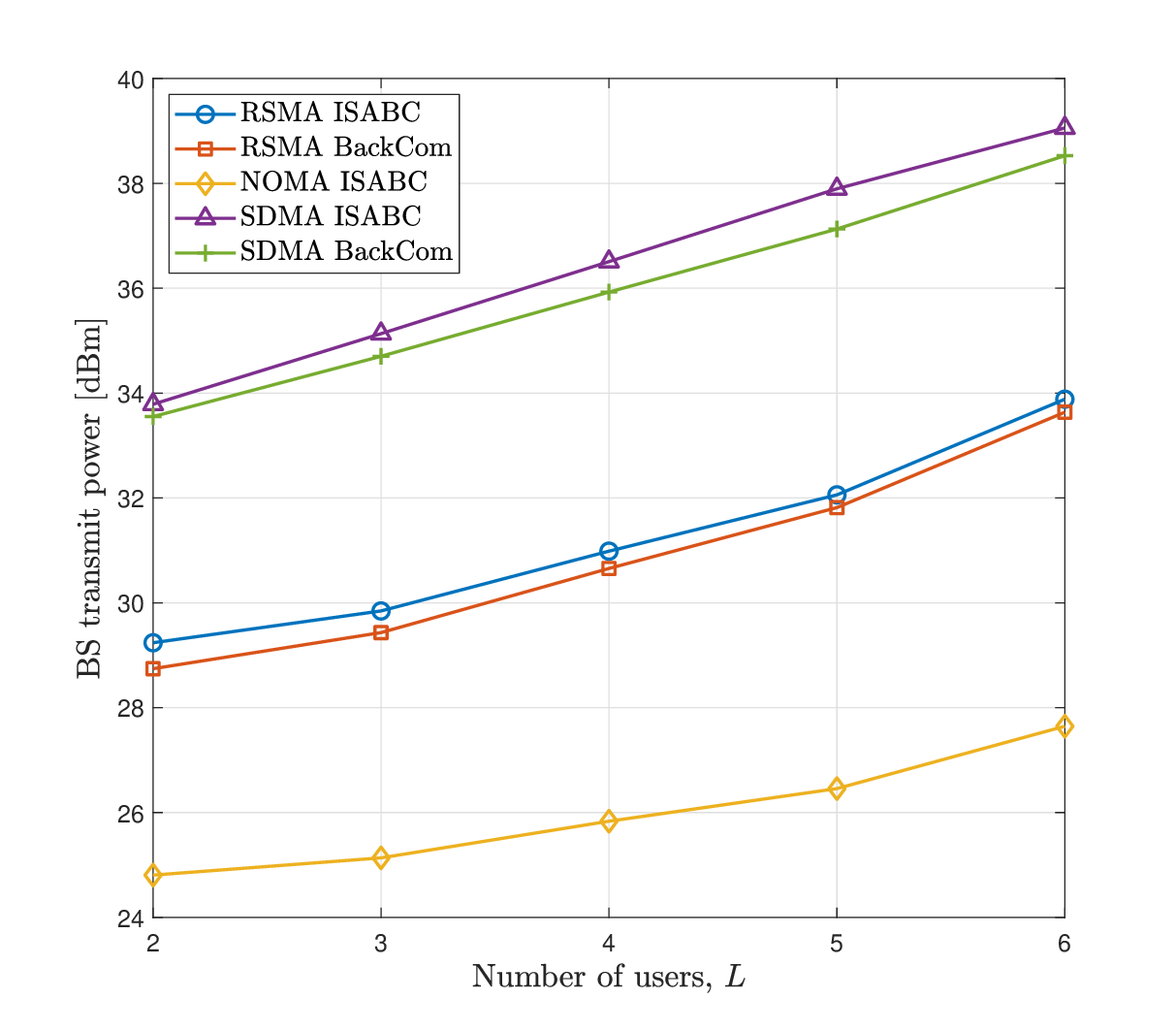}\vspace{-0mm}
    \caption{Transmit power versus the number of users, $L$.}
	\label{fig_No_of_users} \vspace{-0mm}
\end{figure}

{Fig.~{\ref{fig_No_of_users}} shows the relationship between the number of users and BS transmit power requirements. As the number of users increases, the BS requires more transmit power across all schemes. This is because serving more users demands additional energy to maintain adequate QoS, as power must be distributed across more communication links. Additionally, increased multi-user interference can degrade signal quality, requiring extra power to counteract this interference.} 

\subsection{Impacts of SIC and SI cancellation}

\begin{figure}[!t]\vspace{-0mm}	
\centering
\includegraphics[width=0.45\textwidth]{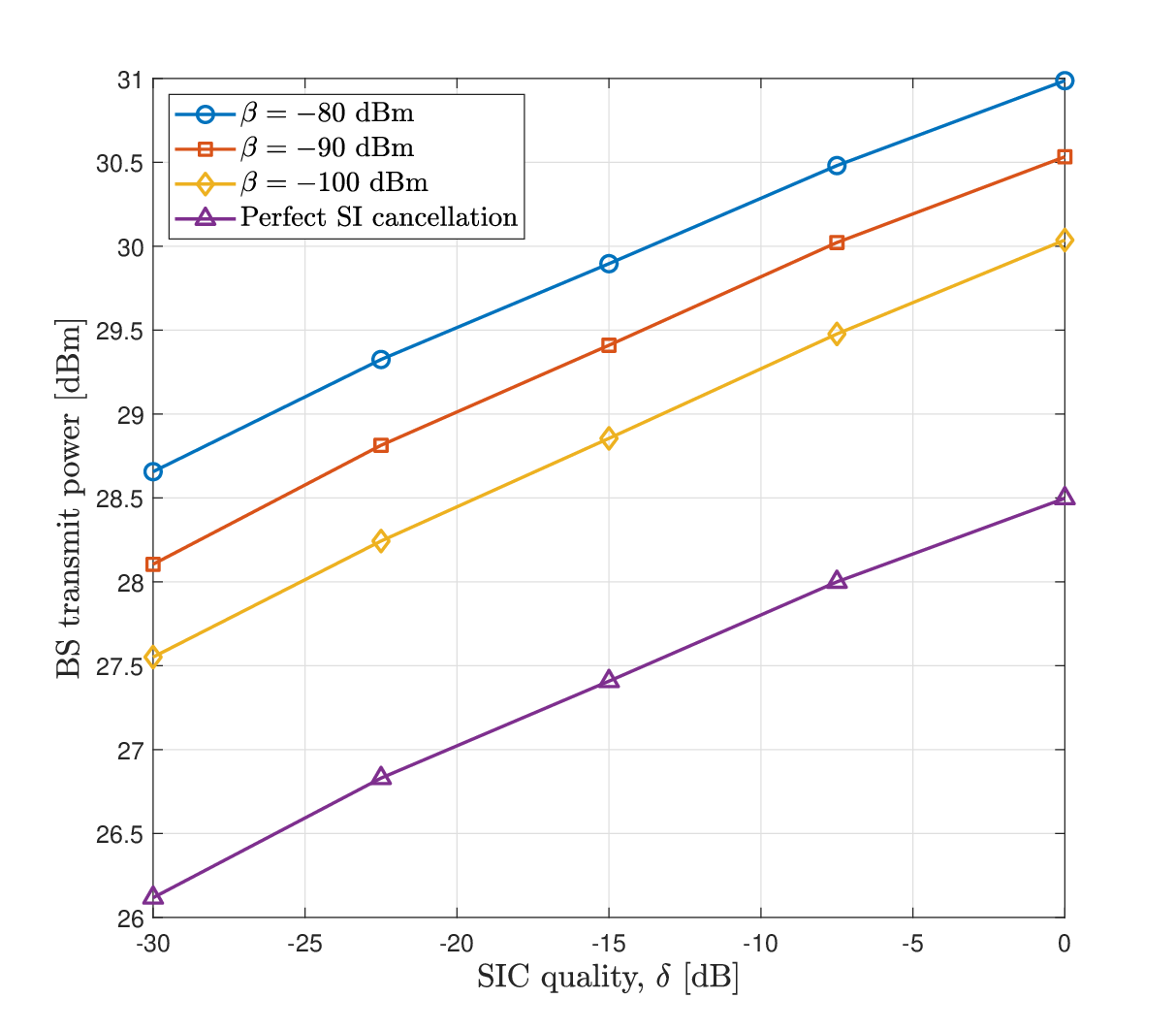}\vspace{-0mm}
    \caption{Transmit power versus SIC imperfection $\delta$ for various residual SI values.}
	\label{fig_SIC_quality} \vspace{-0mm}
\end{figure}

Impairments like SIC errors and SI cancellation quality degrade performance and reliability. SIC errors harm signal reception, while incomplete SI cancellation at FD nodes hinders incoming signal reception. Both errors have a significant impact.

Fig.~\ref{fig_SIC_quality} examines the effects of imperfect SIC and SI cancellation on Algorithm \ref{alg_AO}. The BS transmit power is plotted as a function of SIC quality, i.e., $\delta = \delta_c = \delta_p = \delta_s$, for various SI cancellation qualities, i.e., $\beta=\{-80, -90, -100\}\,\qty{}{\dB}$ and prefect SI cancellation, at the BS. 

According to Fig.~\ref{fig_SIC_quality}, severe SIC imperfection, i.e., $\delta \rightarrow 1\, (\qty{0}{\dB})$ requires the BS to transmit additional power to alleviate the adverse effects of SIC imperfection while maintaining tag EH requirements and communication/sensing rates. Conversely, when  $\delta \rightarrow 0$ or perfect SIC, less power is required to sustain these demands. Imperfect SI cancellation has bad consequences. Variable $\beta \in [0, 1]$ reflects the degree of imperfection at the BS. Further experiments offer insight into transmission power with varying residual SI values. Like SIC errors, the BS transmit power is sacrificed as $\beta$ increases.

\subsection{Impacts of Channel Conditions}

\begin{figure}[!t]\vspace{-0mm}	
\centering
\includegraphics[width=0.45\textwidth]{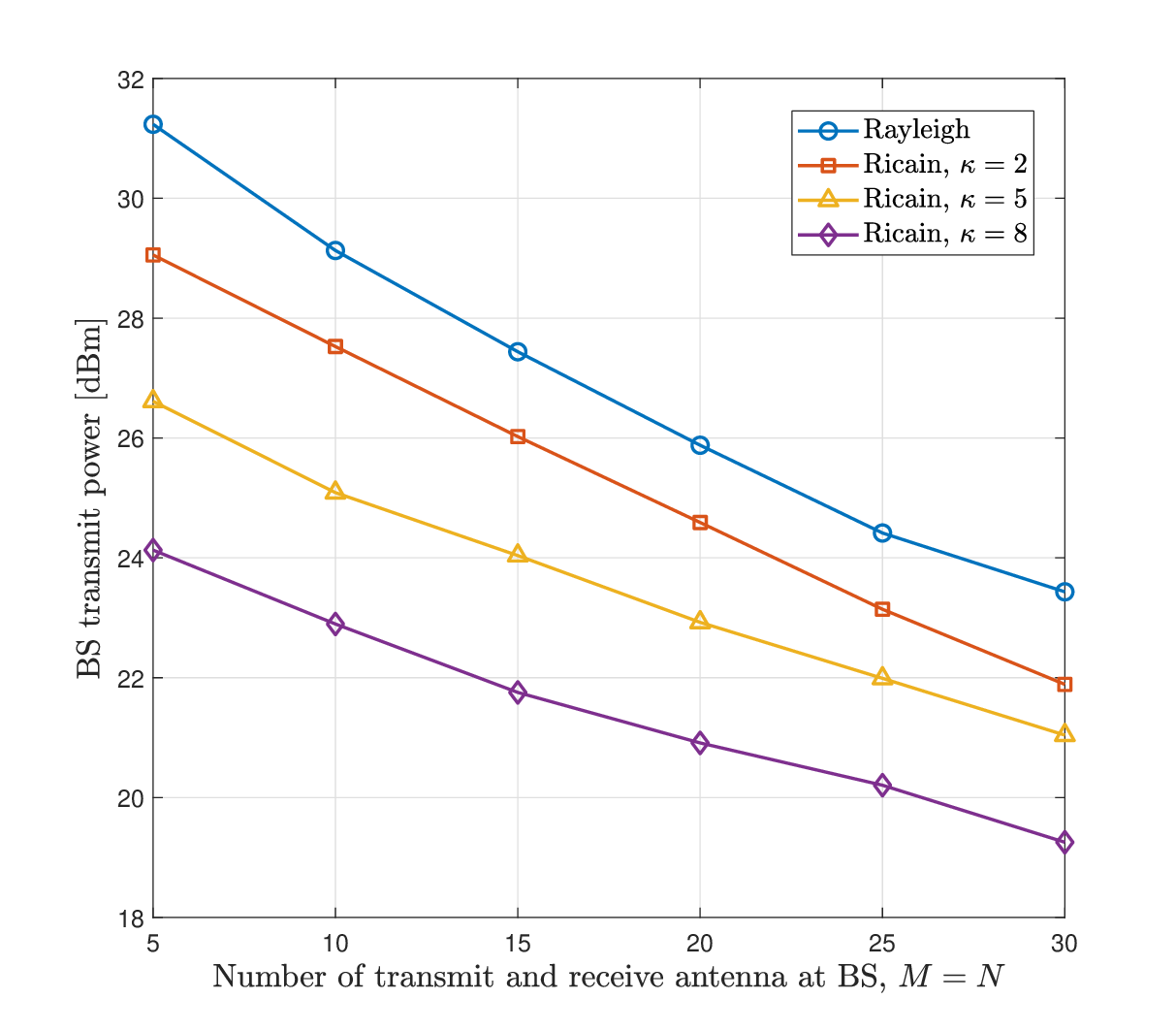}\vspace{-0mm}
    \caption{{Transmit power for different fading channels.}}
	\label{fig_ChannelCondition} \vspace{-0mm}
\end{figure}

{Fig.~{\ref{fig_ChannelCondition}} investigates the impact of different propagation environments. Rayleigh fading is suitable in dense urban and indoor environments with rich multipath scattering and no dominant LoS path. In contrast, Rician fading occurs in suburban, rural, or any setting with a clear LoS path, such as smart homes, farming, warehouses, and similar scenarios {\cite{Tse_Viswanath_2005}}. To this end, the pure communication channels are modeled as Rician fading with Rician factors $\kappa$}  
\begin{align}
    \mathbf{a}  & = \sqrt{\frac{\kappa}{\kappa+1}} \mathbf{a}^{\text{LoS}} + \sqrt{\frac{1}{\kappa+1}} \mathbf{a}^{\text{NLoS}}, 
\end{align}
{where $\mathbf{a} \in \{\mathbf{f}_0, \mathbf{f}_l, v_{l,k}, q_k\}$, $\mathbf{a}^{\text{LoS}}$ is the deterministic LoS components between the transmitter and receiver, and $\mathbf{a}^{\text{NLoS}}$ is the non-LoS (NLoS) components that follow the Rayleigh fading model. As per Fig.~{\ref{fig_ChannelCondition}}, high $\kappa$ results in lower transmit power at the BS. For instance, compared to Rayleigh channels, ${\kappa=\num{2}}$ and ${\kappa=\num{5}}$ reduce transmit power by {\qty{4.9}{\percent}} and {\qty{11.4}{\percent}}, respectively. Thus, the  LoS component improves the signal power utilization.}

\section{Conclusion} 
This paper presents a novel framework integrating RSMA and ISABC. In it, the FD BS  performs two functions:  first, it senses information from backscatter tags, and second, it serves multiple communication  users via RSMA. Conversely, the backscatter reader is responsible for detecting multi-tag backscatter data. An advanced AO-based algorithm is introduced to precisely regulate communication channels, both primary and backscatter, while concurrently reducing the transmit power of the BS. This innovation facilitates passive tags' opportunistic sensing and communication in future IoT networks. For comparative purposes,  the proposed system is benchmarked against  NOMA and SDMA, as well as sensing-only and communication-only approaches.

Rigorous comparisons show that  the RSMA-based system has significant advantages over NOMA and SDMA counterparts. Furthermore, the  impact analysis of SIC errors, the quality of SI cancellation, and different channel conditions are investigated. 

Future research can focus on ISABC channel estimation techniques, waveform/carrier signal design, signal processing at the BS/tag/reader, and developing key performance indicators to explore communication-sensing performance trade-offs. State parameter estimation methodologies can also be refined to enhance precision in dynamic environments. Additionally, evaluating system delay and EE will be vital for supporting future low-power IoT applications, where minimizing power consumption while maintaining reliable communication and sensing is critical for widespread deployment and sustainability in resource-constrained scenarios.

\appendices

\section{{Proof of Theorem 1: Algorithm Convergence}}\label{Theorem_1}
{The main problem is divided into three sub-problems, which optimize tag reflections coefficients  $\left(\qalpha := \{\alpha_k\}_{k\in \mathcal{K}}\right)$, received beamforming $\left( \qU :=\{\qu_k\}_{k\in \mathcal{K}} \right)$, and transmit beamforming $\left(\qP= \{\mathbf{w}_c, \{\mathbf{w}_l\}_{l\in \mathcal{ L}}, \mathbf{S}, \{C_l\}_{l\in \mathcal{ L}}\}\right)$, via solving problems {\eqref{eqn_optimal_u}}, {\eqref{Pw1_prob}}, and {\eqref{Palpha_prob}}, while keeping the other two blocks of variables fixed. Let us define $F(\qU, \qalpha, \qP)$ as a function of $\qU$, $\qalpha$, and $\qP$ for the objective value of {\eqref{P1_obj}}. First, in step $3$ of Algorithm {\ref{alg_AO}} with fixed variables $\qalpha^{(i)}$ and $\qP^{(i)}$, $\qU^{(i+1)}$ is the optimal solution that minimizes the value of the objective function. Accordingly, the following holds:}
\begin{equation}\label{43}
F(\qalpha^{(i)},\qU^{(i+1)},\qP^{(i)}) \leq F(\qalpha^{(i)},\qU^{(i)},\qP^{(i)}).
\end{equation}
{Next, in step $4$ of Algorithm {\ref{alg_AO}}, $\qP^{(i+1)}$ is the optimal transmit beamformers with given variables $\qalpha^{(i)}$ and $\qU^{(i+1)}$ to minimize $F$ via solving {\eqref{Pw_obj}}. Thus, it guarantees that}
\begin{equation}
F(\qalpha^{(i)},\qU^{(i+1)},\qP^{(i+1)}) \leq F(\qalpha^{(i)},\qU^{(i+1)},\qP^{(i)}).
\end{equation}
{Finally, in step $5$ of Algorithm {\ref{alg_AO}} with the given $\qP^{(i+1)}$ and  $\qU^{(i+1)}$, problem {\eqref{Pw1_prob}} is solved to obtain an optimal solution for $\qalpha^{(i)}$, which yields:}
\begin{equation}\label{45}
F(\qalpha^{(i+1)},\qU^{(i+1)},\qP^{(i+1)}) \leq F(\qalpha^{(i)},\qU^{(i+1)},\qP^{(i+1)}).
\end{equation}
{According to {\eqref{43}}--{\eqref{45}}, it follows  that}
\begin{equation} 
\!\!\!\!F(\qalpha^{(i+1)},\qU^{(i+1)},\qP^{(i+1)}) \leq F(\qalpha^{(i+1)},\qU^{(i+1)},\qP^{(i+1)}).
\end{equation}
{The objective values of Algorithm {\ref{alg_AO}} monotonically decrease with each iteration, always remaining non-negative. This consistency, combined with the design choice where each iteration starts from the previous one's end, ensures the algorithm's convergence. Thus, the proof is completed.}

\bibliographystyle{IEEEtran}
\bibliography{IEEEabrv,ref}

\begin{thebibliography}{10}
\providecommand{\url}[1]{#1}
\csname url@samestyle\endcsname
\providecommand{\newblock}{\relax}
\providecommand{\bibinfo}[2]{#2}
\providecommand{\BIBentrySTDinterwordspacing}{\spaceskip=0pt\relax}
\providecommand{\BIBentryALTinterwordstretchfactor}{4}
\providecommand{\BIBentryALTinterwordspacing}{\spaceskip=\fontdimen2\font plus
\BIBentryALTinterwordstretchfactor\fontdimen3\font minus
  \fontdimen4\font\relax}
\providecommand{\BIBforeignlanguage}[2]{{%
\expandafter\ifx\csname l@#1\endcsname\relax
\typeout{** WARNING: IEEEtran.bst: No hyphenation pattern has been}%
\typeout{** loaded for the language `#1'. Using the pattern for}%
\typeout{** the default language instead.}%
\else
\language=\csname l@#1\endcsname
\fi
#2}}
\providecommand{\BIBdecl}{\relax}
\BIBdecl

\bibitem{3GPPISAC2024}
\BIBentryALTinterwordspacing
``{3GPP TR} 22.837, {F}easibility study on integrated sensing and
  communication, {V}.19.3.0 {R}el. 19,'' Apr. 2024. [Online]. Available:
  \url{https://portal.3gpp.org/desktopmodules/Specifications/SpecificationDeta
  ils.aspx?specificationId=4044}
\BIBentrySTDinterwordspacing

\bibitem{HoangBook2020}
D.~T. Hoang, D.~Niyato, D.~I. Kim, N.~V. Huynh, and S.~Gong, \emph{Ambient
  Backscatter Communication Networks}.\hskip 1em plus 0.5em minus 0.4em\relax
  Cambridge University Press, 2020.

\bibitem{Diluka2022}
D.~Galappaththige, F.~Rezaei, C.~Tellambura, and S.~Herath, ``Link budget
  analysis for backscatter-based passive {IoT},'' \emph{{IEEE} Access},
  vol.~10, pp. 128\,890--128\,922, Dec. 2022.

\bibitem{Rezaei2023Coding}
F.~Rezaei, D.~Galappaththige, C.~Tellambura, and S.~Herath, ``Coding techniques
  for backscatter communications - {A} contemporary survey,'' \emph{{IEEE}
  Commun. Surveys Tuts.}, pp. 1020--1058, 2th Quart. 2023.

\bibitem{Rezaei2020}
F.~Rezaei, C.~Tellambura, and S.~Herath, ``Large-scale wireless-powered
  networks with backscatter communications—{A} comprehensive survey,''
  \emph{IEEE Open J. Commun. Soc.}, vol.~1, pp. 1100--1130, Aug. 2020.

\bibitem{Diluka2023}
D.~Galappaththige, C.~Tellambura, and A.~Maaref, ``Integrated sensing and
  backscatter communication,'' \emph{{IEEE} Wireless Commun. Lett.}, vol.~12,
  no.~12, pp. 2043--2047, Dec. 2023.

\bibitem{Huawei_ambient}
\BIBentryALTinterwordspacing
``{3GPP TSG} {RAN} $-$97e3, {S}tudy on ambient {IoT} , 9.1 (from
  {RP}-222685),'' Sept. 2022. [Online]. Available:
  \url{https://portal.3gpp.org/ngppapp/TdocList.aspx?meetingId=60043}
\BIBentrySTDinterwordspacing

\bibitem{Huawei}
\BIBentryALTinterwordspacing
``3{GPP TSG} {RAN} {M}eeting $-$94e, {S}tudy proposal on passive {IoT}, {8A}.1
  (from {RP}-213368),'' Dec. 2021. [Online]. Available:
  \url{https://www.3gpp.org/DynaReport/TDocExMtg--RP-94-e--60214.htm}
\BIBentrySTDinterwordspacing

\bibitem{Khan2021}
W.~U. Khan, F.~Jameel, N.~Kumar, R.~Jäntti, and M.~Guizani,
  ``Backscatter-enabled efficient {V2X} communication with non-orthogonal
  multiple access,'' \emph{{IEEE} Trans. Veh. Technol.}, vol.~70, no.~2, pp.
  1724--1735, Feb. 2021.

\bibitem{Xu2023}
Y.~Xu \emph{et~al.}, ``Robust resource allocation for wireless-powered
  backscatter communication systems with {NOMA},'' \emph{{IEEE} Trans. Veh.
  Technol.}, vol.~72, no.~9, pp. 12\,288--12\,299, Sept. 2023.

\bibitem{Liu2023Covert}
J.~Liu \emph{et~al.}, ``Covert ambient backscatter communications with
  multi-antenna tag,'' \emph{{IEEE} Trans. Wireless Commun.}, vol.~22, no.~9,
  pp. 6199--6212, Sept. 2023.

\bibitem{Caihui2023}
C.~Du \emph{et~al.}, ``Timespan-based backscatter using a single {COTS}
  receiver,'' in \emph{Proc. 21st Annual Int. Conf. Mobile Syst., Appl.
  Services}, ser. MobiSys '23.\hskip 1em plus 0.5em minus 0.4em\relax New York,
  NY, USA: Association for Computing Machinery, Jun. 2023, p. 450–461.

\bibitem{Zargari2024}
S.~Zargari, D.~Galappaththige, and C.~Tellambura, ``Transmit power-efficient
  beamforming design for integrated sensing and backscatter communication,''
  \emph{IEEE Open J. Commun. Soc.}, pp. 1--1, Jan. 2025.

\bibitem{Liu2022ISAC}
A.~Liu \emph{et~al.}, ``A survey on fundamental limits of integrated sensing
  and communication,'' \emph{{IEEE} Commun. Surveys Tuts.}, vol.~24, no.~2, pp.
  994--1034, 2th Quart. 2022.

\bibitem{Azar2024}
A.~Hakimi, D.~Galappaththige, and C.~Tellambura, ``A roadmap for {NF-ISAC} in
  {6G}: A comprehensive overview and tutorial,'' \emph{Entropy}, vol.~26,
  no.~9, Sept. 2024.

\bibitem{Diluka2024NF}
D.~Galappaththige, S.~Zargari, C.~Tellambura, and G.~Y. Li, ``Near-field
  {ISAC}: Beamforming for multi-target detection,'' \emph{IEEE Wireless
  Communications Letters}, vol.~13, no.~7, pp. 1938--1942, Jul. 2024.

\bibitem{Clerckx2016}
B.~Clerckx, H.~Joudeh, C.~Hao, M.~Dai, and B.~Rassouli, ``Rate splitting for
  {MIMO} wireless networks: A promising {PHY}-layer strategy for {LTE}
  evolution,'' \emph{{IEEE} Commun. Mag.}, vol.~54, no.~5, pp. 98--105, May
  2016.

\bibitem{Mao2022}
Y.~Mao \emph{et~al.}, ``Rate-splitting multiple access: Fundamentals, survey,
  and future research trends,'' \emph{{IEEE} Commun. Surveys Tuts.}, vol.~24,
  no.~4, pp. 2073--2126, 2th Quart. 2022.

\bibitem{Xu2021}
C.~Xu, B.~Clerckx, S.~Chen, Y.~Mao, and J.~Zhang, ``Rate-splitting multiple
  access for multi-antenna joint radar and communications,'' \emph{{IEEE} J.
  Sel. Topics Signal Process.}, vol.~15, no.~6, pp. 1332--1347, Nov. 2021.

\bibitem{Mao2018Rate}
Y.~Mao, B.~Clerckx, and V.~O. Li, ``Rate-splitting multiple access for downlink
  communication systems: Bridging, generalizing, and outperforming sdma and
  {NOMA},'' \emph{EURASIP J. Wireless Commun. Netw.}, vol. 2018, pp. 1--54,
  Dec. 2018.

\bibitem{Long2020}
R.~Long, Y.-C. Liang, H.~Guo, G.~Yang, and R.~Zhang, ``Symbiotic radio: A new
  communication paradigm for passive internet of things,'' \emph{{IEEE}
  Internet Things J.}, vol.~7, no.~2, pp. 1350--1363, Feb. 2020.

\bibitem{positioningLTE}
``Positioning techniques for mobile devices in {LTE},'' July 2015. Available
  Online:
  \url{https://www.hsc.com/resources/blog/positioning-techniques-for-mobile-devices-in-lte/}.

\bibitem{Zhenyao2023}
Z.~He \emph{et~al.}, ``Full-duplex communication for {ISAC}: Joint beamforming
  and power optimization,'' \emph{{IEEE} J. Sel. Areas Commun.}, vol.~41,
  no.~9, pp. 2920--2936, Sept. 2023.

\bibitem{Du2024ConcurScatter}
C.~Du, J.~Yu, R.~Zhang, and J.~An, ``{ConcurScatter}: Scalable concurrent
  {OFDM} backscatter using subcarrier pattern diversity,'' in \emph{IEEE Conf.
  Comput. Commun.}, May 2024, pp. 1771--1780.

\bibitem{Du2024Orthcatter}
C.~Du, J.~Yu, R.~Zhang, J.~Ren, and J.~An, ``{Orthcatter}: High-throughput
  in-band {OFDM} backscatter with {Over-the-Air} code division,'' in \emph{21st
  USENIX Symp. Netw. Systems Design Implementation}.\hskip 1em plus 0.5em minus
  0.4em\relax Santa Clara, CA: USENIX Association, Apr. 2024, pp. 1301--1314.

\bibitem{Bowen2023}
B.~Gu, D.~Li, Y.~Liu, and Y.~Xu, ``Exploiting constructive interference for
  backscatter communication systems,'' \emph{{IEEE} Trans. Commun.}, vol.~71,
  no.~7, pp. 4344--4359, Jul. 2023.

\bibitem{bezdek2003convergence}
J.~C. Bezdek and R.~J. Hathaway, ``Convergence of alternating optimization,''
  \emph{Neural, Parallel \& Scientific Computations}, vol.~11, no.~4, pp.
  351--368, Dec. 2003.

\bibitem{so2007approximating}
A.~M.-C. So, J.~Zhang, and Y.~Ye, ``On approximating complex quadratic
  optimization problems via semidefinite programming relaxations,''
  \emph{Mathematical Programming}, vol. 110, no.~1, pp. 93--110, Jun. 2007.

\bibitem{Qingqing2019}
Q.~Wu and R.~Zhang, ``Intelligent reflecting surface enhanced wireless network
  via joint active and passive beamforming,'' \emph{{IEEE} Trans. Wireless
  Commun.}, vol.~18, no.~11, pp. 5394--5409, Nov. 2019.

\bibitem{Shayan2021}
S.~Zargari, S.~Farahmand, B.~Abolhassani, and C.~Tellambura, ``Robust active
  and passive beamformer design for {IRS}-aided downlink {MISO} {PS-SWIPT} with
  a nonlinear energy harvesting model,'' \emph{{IEEE} Trans. Green Commun.
  Netw.}, vol.~5, no.~4, pp. 2027--2041, Dec. 2021.

\bibitem{Razaviyayn2013}
M.~Razaviyayn, M.~Hong, and Z.-Q. Luo, ``A unified convergence analysis of
  block successive minimization methods for nonsmooth optimization,''
  \emph{{SIAM} J. Optim.}, vol.~23, no.~2, pp. 1126--1153, Sept. 2013.

\bibitem{Yin2022ISAC}
L.~Yin, Y.~Mao, O.~Dizdar, and B.~Clerckx, ``Rate-splitting multiple access for
  {6G}—part {II}: Interplay with integrated sensing and communications,''
  \emph{{IEEE} Commun. Lett.}, vol.~26, no.~10, pp. 2237--2241, Oct. 2022.

\bibitem{Liu2024}
Z.~Liu, L.~Yin, W.~Shin, and B.~Clerckx, ``Rate-splitting multiple access for
  quantized {ISAC} {LEO} satellite systems: A max-min fair energy-efficient
  beam design,'' \emph{{IEEE} Trans. Wireless Commun.}, vol.~23, no.~10, pp.
  15\,394--15\,408, Oct. 2024.

\bibitem{He2024}
L.~He, Y.~Chen, M.~Wen, and T.~A. Tsiftsis, ``Multi-vehicle collaborative
  sensing in {RSMA}-assisted {ISAC} systems,'' in \emph{Proc. IEEE Int. Conf.
  Commun. Workshops}, Jun. 2024, pp. 1870--1875.

\bibitem{Chen2024}
K.~Chen, Y.~Mao, L.~Yin, C.~Xu, and Y.~Huang, ``Rate-splitting multiple access
  for simultaneous multi-user communication and multi-target sensing,''
  \emph{{IEEE} Trans. Veh. Technol.}, vol.~73, no.~9, pp. 13\,909--13\,914,
  Sept. 2024.

\bibitem{Vu2024}
T.-H. Vu, D.~B. Da~Costa, B.~V.~N. Quoc, and S.~Kim, ``A novel paradigm shift
  for next-generation: Symbiotic backscatter rate-splitting multiple access
  systems,'' in \emph{Proc. Tenth Int. Conf. Commun. Electron.}, Jul. 2024, pp.
  724--728.

\bibitem{Mohammadi2023}
M.~Mohammadi, Z.~Mobini, D.~Galappaththige, and C.~Tellambura, ``A
  comprehensive survey on full-duplex communication: Current solutions, future
  trends, and open issues,'' \emph{{IEEE} Commun. Surveys Tuts.}, vol.~25,
  no.~4, pp. 2190--2244, 4th Quart. 2023.

\bibitem{Diluka2024CFFD}
D.~Galappaththige, M.~Mohammadi, H.~Q. Ngo, M.~Matthaiou, and C.~Tellambura,
  ``Cell-free full-duplex communication -- {An} overview,'' \emph{arXiv}, 2024.

\bibitem{Kim2021}
T.~Kim, K.~Min, and S.~Park, ``Self-interference channel training for
  full-duplex massive {MIMO} systems,'' \emph{Sensors}, vol.~21, no.~9, p.
  3250, May 2021.

\bibitem{ZhangZLK199}
Y.~Zhang, Q.~Zhang, Y.-C. Liang, and P.~Y. Kam, ``A semi-blind receiver for
  ambient backscatter communications with {MPSK} {RF} source,'' in \emph{Proc.
  IEEE/CIC Int. Conf. Commun. Workshops China}, Aug. 2019, pp. 71--76.

\bibitem{Zargari10320395}
S.~Zargari, A.~Hakimi, C.~Tellambura, and A.~Maaref, ``Enhancing {AmBC} systems
  with deep learning for joint channel estimation and signal detection,''
  \emph{{IEEE} Trans. Commun.}, pp. 1--1, 2023.

\bibitem{rezaei2023timespread}
F.~Rezaei, D.~Galappaththige, C.~Tellambura, and A.~Maaref, ``Time-spread
  pilot-based channel estimation for backscatter networks,'' \emph{{IEEE}
  Trans. Commun.}, vol.~72, no.~1, pp. 434--449, Jan. 2024.

\bibitem{Galappaththige2024RSMA}
D.~Galappaththige and C.~Tellambura, ``Sum rate maximization for
  {RSMA}-assisted {CF mMIMO} networks with {SWIPT} users,'' \emph{{IEEE}
  Wireless Commun. Lett.}, vol.~13, no.~5, pp. 1300--1304, May 2024.

\bibitem{Liao2020}
Y.~Liao, G.~Yang, and Y.-C. Liang, ``Resource allocation in {NOMA}-enhanced
  full-duplex symbiotic radio networks,'' \emph{{IEEE} Access}, vol.~8, pp.
  22\,709--22\,720, Jan. 2020.

\bibitem{Zhang2013}
R.~Zhang and C.~K. Ho, ``{MIMO} broadcasting for simultaneous wireless
  information and power transfer,'' \emph{{IEEE} Trans. Wireless Commun.},
  vol.~12, no.~5, pp. 1989--2001, May 2013.

\bibitem{Hakimi2023}
A.~Hakimi, S.~Zargari, C.~Tellambura, and S.~Herath, ``Sum rate maximization of
  {MIMO} monostatic backscatter networks by suppressing residual
  self-interference,'' \emph{{IEEE} Trans. Commun.}, vol.~71, no.~1, pp.
  512--526, Jan. 2023.

\bibitem{Boshkovska2015}
E.~Boshkovska, D.~W.~K. Ng, N.~Zlatanov, and R.~Schober, ``Practical non-linear
  energy harvesting model and resource allocation for {SWIPT} systems,''
  \emph{{IEEE} Commun. Lett.}, vol.~19, no.~12, pp. 2082--2085, Dec. 2015.

\bibitem{He2022}
Z.~He, W.~Xu, H.~Shen, Y.~Huang, and H.~Xiao, ``Energy efficient beamforming
  optimization for integrated sensing and communication,'' \emph{{IEEE}
  Wireless Commun. Lett.}, vol.~11, no.~7, pp. 1374--1378, Jul. 2022.

\bibitem{Stoica2007}
P.~Stoica, J.~Li, and Y.~Xie, ``On probing signal design for {MIMO} radar,''
  \emph{{IEEE} Trans. Signal Process.}, vol.~55, no.~8, pp. 4151--4161, Aug.
  2007.

\bibitem{Cui2014}
G.~Cui, H.~Li, and M.~Rangaswamy, ``{MIMO} radar waveform design with constant
  modulus and similarity constraints,'' \emph{{IEEE} Trans. Signal Process.},
  vol.~62, no.~2, pp. 343--353, Jan. 2014.

\bibitem{Tang2019}
B.~Tang and J.~Li, ``Spectrally constrained {MIMO} radar waveform design based
  on mutual information,'' \emph{{IEEE} Trans. Signal Process.}, vol.~67,
  no.~3, pp. 821--834, Feb. 2019.

\bibitem{bertsekas1997nonlinear}
D.~P. Bertsekas, ``Nonlinear {P}rogramming,'' \emph{J. Oper. Res. Soc.},
  vol.~48, no.~3, pp. 334--334, 1997.

\bibitem{Stanczak2008book}
S.~Stanczak, \emph{\BIBforeignlanguage{eng}{Fundamentals of Resource Allocation
  in Wireless Networks Theory and Algorithms}}, 2nd~ed.\hskip 1em plus 0.5em
  minus 0.4em\relax Berlin, Heidelberg: Springer Berlin Heidelberg, 2008.

\bibitem{Wan2016}
W.~Wan, X.~Wang, J.~Yang, and B.~Zhao, ``Joint linear pre-coder and combiner
  optimization for distributed antenna systems,'' in \emph{Proc. IEEE Global
  Commun. Conf.}, Dec. 2016, pp. 1--6.

\bibitem{boyd2004convex}
S.~Boyd and L.~Vandenberghe, \emph{Convex Optimization}.\hskip 1em plus 0.5em
  minus 0.4em\relax Cambridge, U.K.: Cambridge Univ. Press, Mar. 2004.

\bibitem{grant2014cvx}
M.~Grant and S.~Boyd, ``{CVX}: {M}atlab software for disciplined convex
  programming, version 2.1,'' 2014.

\bibitem{Shen2018}
K.~Shen and W.~Yu, ``Fractional programming for communication systems—part
  {I}: Power control and beamforming,'' \emph{{IEEE} Trans. Signal Process.},
  vol.~66, no.~10, pp. 2616--2630, May 2018.

\bibitem{polik2010interior}
I.~P’olik and T.~Terlaky, \emph{Interior Point Methods for Nonlinear
  Optimization}.\hskip 1em plus 0.5em minus 0.4em\relax Berlin, Germany; New
  York, NY, USA: Springer, 2010.

\bibitem{leiserson2010parallel}
C.~E. Leiserson and T.~B. Schardl, ``A work-efficient parallel breadth-first
  search algorithm (or how to cope with the nondeterminism of reducers),'' in
  \emph{Proc. 22nd {ACM} symp. Parallelism in Algorithms and Archit.}, Jun.
  2010, pp. 303--314.

\bibitem{3GPP2010}
\BIBentryALTinterwordspacing
``{3GPP TR} 36.814, {F}urther advancements for {E-UTRA} physical layer aspects,
  {V}.9.0.0 {R}el. 9,'' Mar. 2010. [Online]. Available:
  \url{https://portal.3gpp.org/desktopmodules/Specifications/SpecificationDetai
  ls.aspx?specificationId=2493}
\BIBentrySTDinterwordspacing

\bibitem{Galappaththige2023}
D.~Galappaththige, F.~Rezaei, C.~Tellambura, and S.~Herath, ``{RIS}-empowered
  ambient backscatter communication systems,'' \emph{{IEEE} Wireless Commun.
  Lett.}, vol.~12, no.~1, pp. 173--177, Jan. 2023.

\bibitem{Tse_Viswanath_2005}
D.~Tse and P.~Viswanath, \emph{Fundamentals of Wireless Communication}.\hskip
  1em plus 0.5em minus 0.4em\relax Cambridge University Press, 2005.

\end{thebibliography}

\begin{IEEEbiography}[{\includegraphics[width=1in,height=1.2in,clip,keepaspectratio]{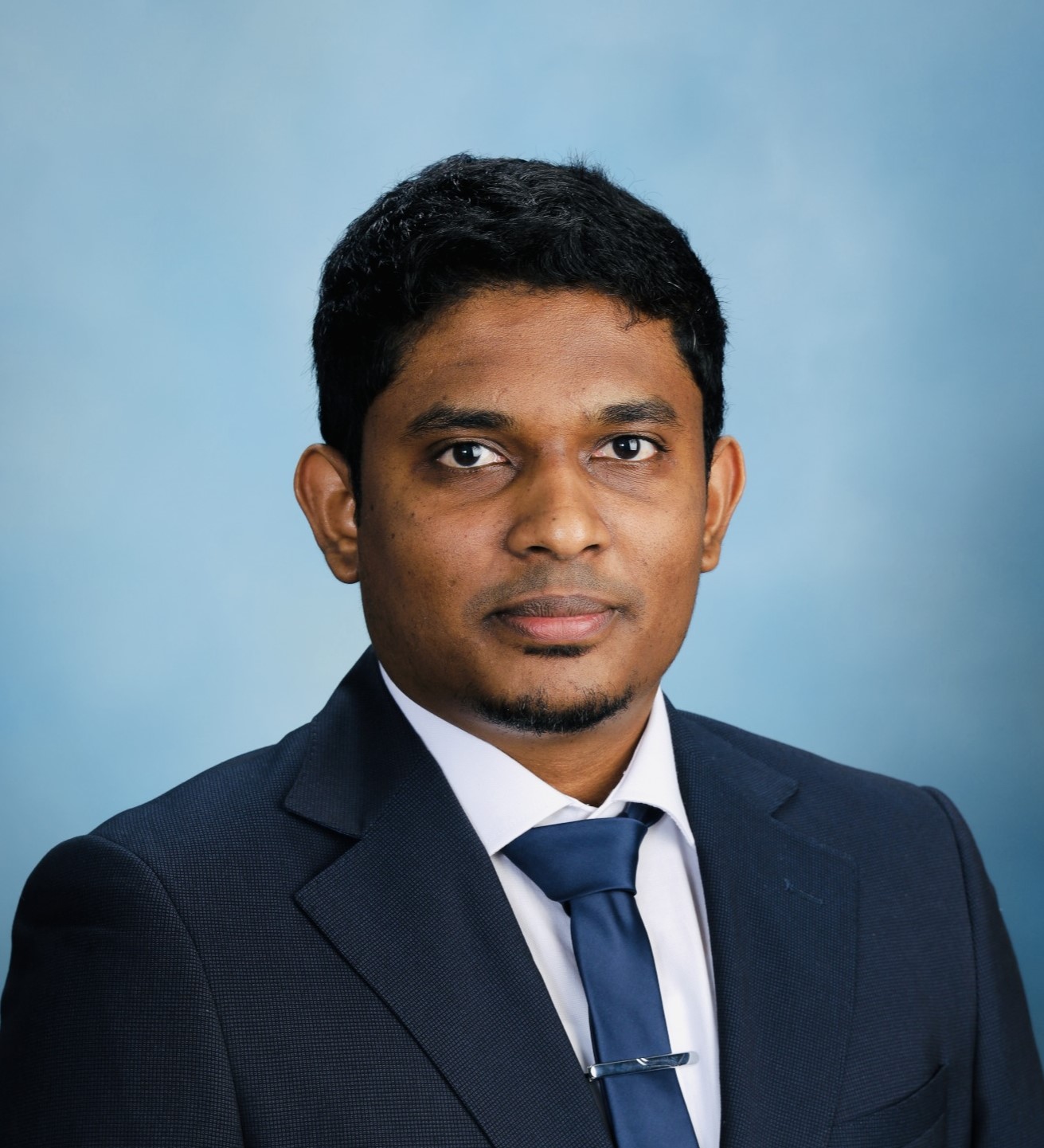}}]{Diluka  Galappaththige} \;(S'17--M'22) is a postdoctoral research fellow at the Department of Electrical and Computer Engineering, University of Alberta, Canada. He received his B.Sc. degree (First-class honor) in Electrical and Electronic Engineering from the Department of Electrical and Electronic Engineering, University of Peradeniya, Sri Lanka, in 2017, and his Ph.D. in Electrical and Computer Engineering from the School of Electrical, Computer, and Biomedical Engineering, Southern Illinois University, Carbondale, IL, USA, in 2021. 

Dr. Galappaththige has been awarded a post-doctoral fellowship from NSERC for the academic year 2024-2026. His current research interests include but are not limited to, the design, modeling, and analysis of massive multiple-input multiple-output (mMIMO) communication (i.e., including co-located mMIMO and cell-free/distributed mMIMO), full-duplex communication (FD), backscatter communication (BackCom), reconfigurable intelligent surfaces (RISs), integrated sensing and communication (ISAC), near-field ISAC, wireless power transfer, and emerging technologies for enabling fifth-generation (5G) and beyond wireless networks. He was a recipient of the Exemplary Reviewer Award for IEEE Wireless  Communications Letters (TWCL) in 2020, and IEEE  Communications Letters (TCL) in 2021. He has actively served as a reviewer for a variety of IEEE journals and conferences, including IEEE Transactions on Communications, IEEE Internet of Things Journal, IEEE Transactions on Vehicular Technology, IEEE Transactions on Green Communications and Networking, IEEE Access, IEEE Network, IEEE Open Journal of Communications Society, TCL, TWCL, IEEE International Conference on Communications, and IEEE Global Communications Conference.
\end{IEEEbiography}

\begin{IEEEbiography}[{\includegraphics[width=1in,height=1.25in,clip,keepaspectratio]{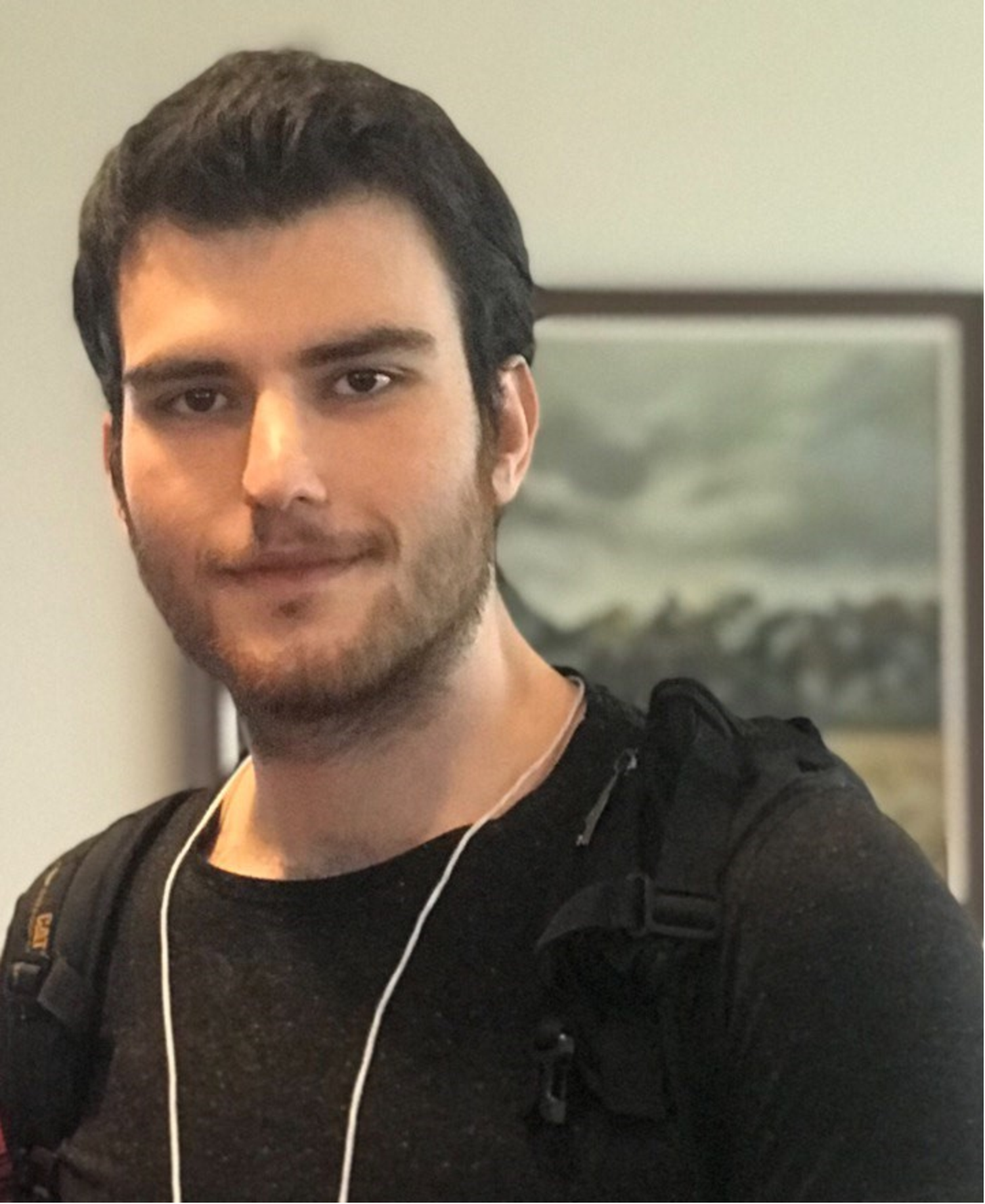}}]
{Shayan Zargari}\; received the B.Sc. degree in Electronic Engineering from Azad University, South Tehran Branch, Tehran, Iran, in 2018, and the M.Sc. degree in Telecommunication Engineering from the Iran University of Science and Technology, Tehran, Iran, in 2020. He is currently pursuing the Ph.D. degree in communications and signal processing at the University of Alberta, Edmonton, AB, Canada. 

From 2019 to 2020, he was a visiting researcher at the Electronics Research Institute, Sharif University of Technology, Tehran, Iran, and from 2020 to 2021, he was a visiting researcher at the Department of Electrical and Computer Engineering, Tarbiat Modares University, Tehran, Iran. His research interests include optimization theory, integrated sensing and communication (ISAC), backscatter communication (BackCom), intelligent reflecting surfaces (IRS), unmanned aerial vehicle (UAV) communications, resource allocation in fifth-generation (5G)/six-generation (6G) wireless communication, and green communication. He serves as a reviewer for several IEEE journals, including IEEE JOURNAL ON SELECTED AREAS IN COMMUNICATIONS, IEEE TRANSACTIONS ON COMMUNICATIONS, and IEEE TRANSACTIONS ON WIRELESS COMMUNICATIONS.
\end{IEEEbiography}

\begin{IEEEbiography}[{\includegraphics[width=1in,height=1.25in,clip,keepaspectratio]{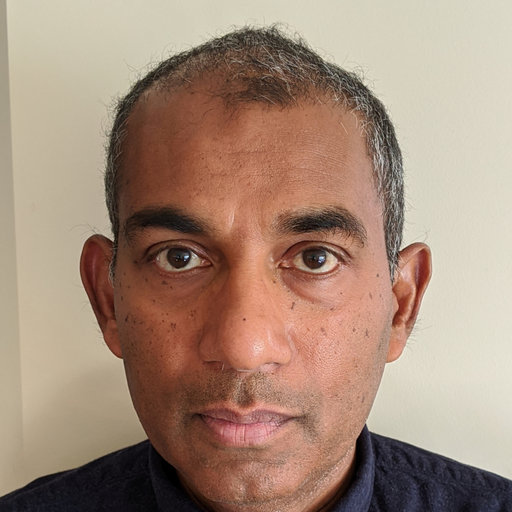}}]{Chintha Tellambura} \;(Fellow, IEEE) received the B.Sc. degree (First class) in electrical and electronic engineering from the University of Moratuwa, Sri Lanka, the  M.Sc. degree in electrical engineering from King’s College, University of London, and the  Ph.D. degree in electrical engineering from the University of Victoria, Canada. He was with Monash University, Australia, from 1997 to 2002. 
Dr. Tellambura is a  Professor in the Department of Electrical and Computer Engineering at the University of Alberta. He has authored or co-authored over 650 journals and conference papers, demonstrating his expertise in the field. According to Google Scholar, his exceptional scholarly contributions have earned him an impressive H-index of 84. Dr. Tellambura has made significant contributions to various areas of research, including future wireless networks, machine learning for wireless networks, and signal processing. 

Recognizing his outstanding accomplishments, he was elected as an IEEE Fellow in 2011 for his noteworthy contributions to physical layer wireless communication theory. In 2017, he was further honored as a fellow of the Canadian Academy of Engineering, a testament to his exceptional achievements.  His dedication and expertise have been acknowledged through prestigious awards, including the Best Paper Awards in the Communication Theory Symposium in 2012, the IEEE International Conference on Communications (ICC) held in Canada in 2017, and another ICC in France. Moreover, Dr. Tellambura has been honored with the esteemed McCalla Professorship and the Killam Annual Professorship by the University of Alberta, further underscoring his significant impact on academia. Dr. Tellambura has also played a vital role in editorial responsibilities within the IEEE community. He served as an Editor for the IEEE Transactions on Communications from 1999 to 2011 and for the IEEE Transactions on Wireless Communications from 2001 to 2007. In the latter role, he was Area Editor of Wireless Communications Systems and Theory from 2007 to 2012, contributing to advancing the field through his editorial expertise.
\end{IEEEbiography}

\begin{IEEEbiography}[{\includegraphics[width=1in,height=1.25in,clip,keepaspectratio]{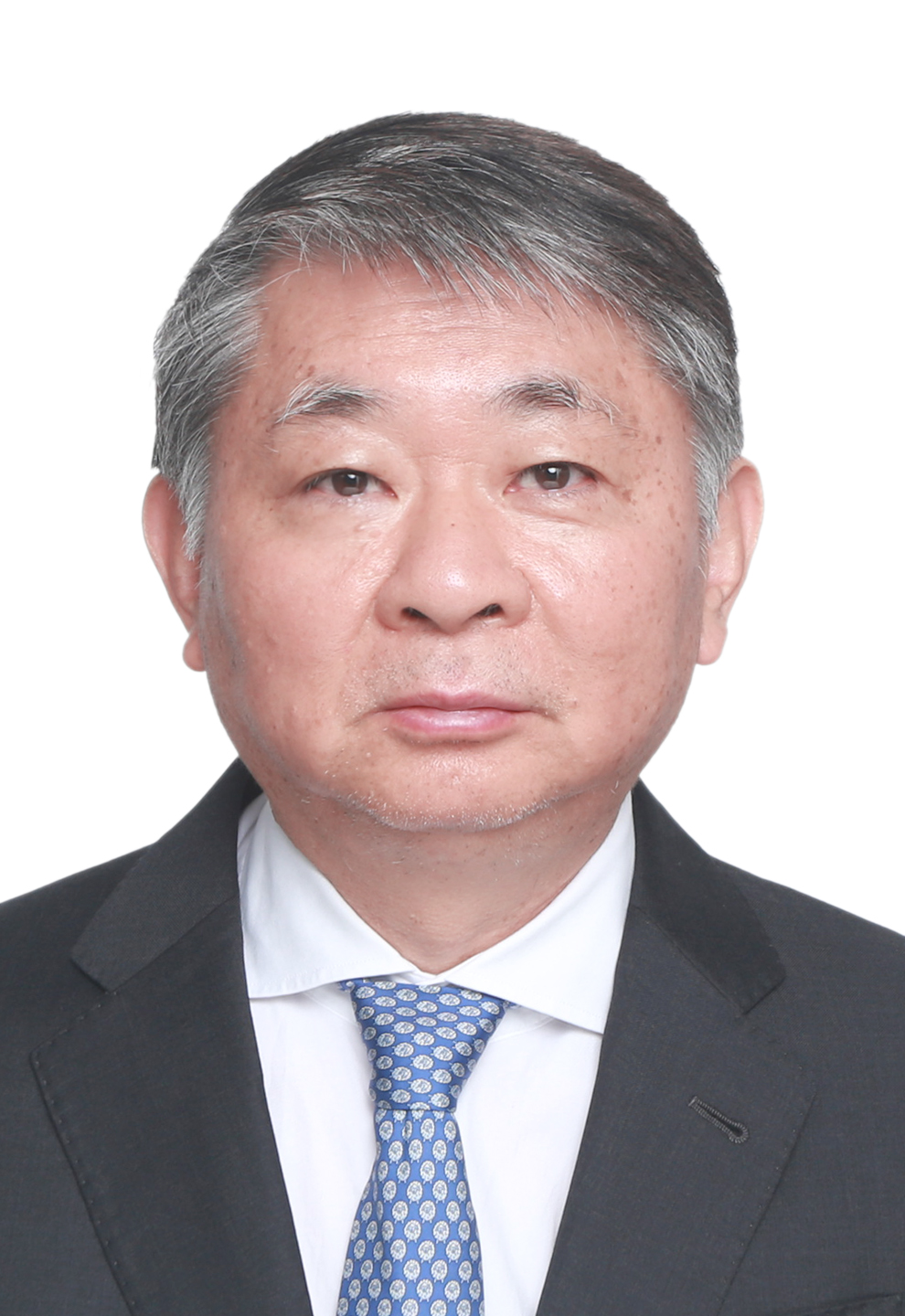}}]{Geoffrey Ye Li} \;(Fellow, IEEE) is currently a Chair Professor at Imperial College London, UK. Before joining Imperial in 2020, he was a Professor at Georgia Institute of Technology, USA, for 20 years and a Principal Technical Staff Member with AT\&T Labs – Research (previous Bell Labs) in New Jersey, USA, for five years. He made fundamental contributions to orthogonal frequency division multiplexing (OFDM) for wireless communications, established a framework on resource cooperation in wireless networks, and introduced deep learning to communications. In these areas, he has published around 700 journal and conference papers in addition to over 40 granted patents. His publications have been cited over 75,000 times with an H-index of 127. He has been listed as a Highly Cited Researcher by Clarivate/Web of Science almost every year.

Dr. Geoffrey Ye Li was elected to Fellow of the Royal Academic of Engineering (FREng), IEEE Fellow, and IET Fellow for his contributions to signal processing for wireless communications. He received 2024 IEEE Eric E. Sumner Award, 2019 IEEE ComSoc Edwin Howard Armstrong Achievement Award, and several other awards from IEEE Signal Processing, Vehicular Technology, and Communications Societies.
\end{IEEEbiography}

\end{document}